\newtheorem{theorem}{Theorem}[section]
\newtheorem{lemma}[theorem]{Lemma}
\newtheorem{corollary}[theorem]{Corollary}
\newtheorem{prop}[theorem]{Proposition}
\newtheorem{claim}[theorem]{Claim}
\newtheorem{fact}[theorem]{Fact}
\theoremstyle{definition}
\newtheorem{defn}[theorem]{Definition}
\newtheorem{assumption}[theorem]{Assumption}
\theoremstyle{remark}
\newcommand\N{\mathbb{N}}
\newcommand\R{\mathbb{R}}
\newcommand\Z{\mathbb{Z}}
\newcommand\cN{\mathcal{N}}
\newcommand{\bv}{\mathbf{v}}
\newcommand\cT{\mathcal{T}}
\newcommand\eps{\varepsilon}
\newcommand{\cS}{\mathcal{S}}
\renewcommand{\leq}{\leqslant}
\renewcommand{\geq}{\geqslant}
\renewcommand{\to}{\rightarrow}
\def\eps{\varepsilon}
\def\C{\mathbb{C}}
\def\R{\mathbb{R}}
\def\Z{\mathbb{Z}}
\def\N{\mathbb{N}}
\def\1{\mathbbm{1}}
\newcommand{\one}{\mathbf{1}}
\newcommand{\bx}{\mathbf{x}}
\newcommand{\by}{\mathbf{y}}
\newcommand{\bz}{\mathbf{z}}
\newcommand{\supp}{\mathrm{supp}}
\begin{document}
	
	\title{Quasipolynomial-time algorithms for  Gibbs point processes}
	\begin{abstract}
		We demonstrate a quasipolynomial-time deterministic approximation algorithm for the partition function of a Gibbs point process interacting via a finite-range stable potential.  This result holds for all activities $\lambda$ for which the partition function satisfies a zero-free assumption in a neighborhood of the interval $[0,\lambda]$.  As a corollary, for all finite-range stable potentials we obtain a quasipolynomial-time determinsitic algorithm for all $\lambda < /(e^{B + 1} \hat C_\phi)$ where $\hat C_\phi$ is a temperedness parameter and $B$ is the stability constant of $\phi$.  In the special case of a repulsive potential such as the hard-sphere gas we improve the range of activity by a factor of at least $e^2$ and obtain a quasipolynomial-time deterministic approximation algorithm for all $\lambda < e/\Delta_\phi$, where $\Delta_\phi$ is the potential-weighted connective constant of the potential $\phi$.   Our algorithm approximates coefficients of the cluster expansion of the partition function and uses the interpolation method of Barvinok to extend this approximation throughout the zero-free region.
	\end{abstract}
	
	\author{Matthew Jenssen}
	\address{King's College London, Department of Mathematics}
	\email{matthew.jenssen@kcl.ac.uk}
	
	\author{Marcus Michelen}
	\address{Department of Mathematics, Statistics, and Computer Science\\ University of Illinois at Chicago}
	\email{michelen.math@gmail.com}
	
	\author{Mohan Ravichandran}
	\address{Bogazici University, Department of Mathematics}
	\email{mohan.ravichandran@gmail.com}

	\maketitle

	\section{Introduction}
	
	Gibbs point processes are a fundamental model of random spatial phenomena in the continuum. Most classically, such processes are used to model a gas under local interactions (see Ruelle's \cite{ruelle1999statistical} text).  Beyond that, Gibbs point processes are used to model various physical phenomenon that often exhibit local repulsion, such as the locations of galaxies in the universe, the time and place of earthquakes, and the growth of trees in a forest; see \cite{moller2007modern,daley2007introduction} for these applications and more.  A simple and well-studied example of a Gibbs point process is the \emph{hard-sphere model}, where one samples a Poisson point process in a set of finite volume in $\R^d$ and conditions on no two points having distance less than some parameter $r > 0$.  In order to better understand these models one often wants 
	to approximately compute the partition function of the model, which may be understood as a weighted count of allowable configurations of points.  The partition function grows exponentially in the volume, making exact computation intractable even for basic examples.  Additionally, the rate of exponential growth is equal to the infinite-volume pressure, a central quantity in statistical physics. 
	
	Approximating the partition function and sampling---either approximately or exactly---are the two main algorithmic problems associated to Gibbs point processes.  Under very mild assumptions, polynomial-time approximate sampling of the point process can be used to provide a \emph{randomized} approximation to the partition function. Many techniques have been applied to Gibbs point processes for approximate and exact sampling in certain regimes; in fact, the seminal Metropolis-Hastings algorithm was developed to sample from the hard-sphere model in dimension 2~\cite{metropolis1953equation}.  
	
	On the other hand, \emph{deterministic} approximation algorithms for partition functions of Gibbs point processes are less well-understood. 	For Gibbs point processes, to our knowledge the only rigorous result giving a deterministic algorithm is that of Friedrich, G\"obel, Katzmann, Krejca and Pappik \cite{friedrich2022algorithms}, which shows that for the special case of hard-spheres, one may approximate the partition function in quasipolynomial time for a certain range of parameters. The authors show that the hard-sphere model can be well approximated by its discrete analogue (the hard core model) allowing them to apply known algorithmic results from the discrete setting (see Section~\ref{sec:context} for more detail). 
	
		In this paper, we provide a quasipolynomial time deterministic approximation algorithm for the partition function for a general class of stable Gibbs point processes.  Our main result is stated only under the assumption of zero-freeness of the partition function; from there, we deduce two main corollaries using existing zero-freeness results from the literature, one which applies for all stable potentials and a stronger result that applies for the more restricted class of repulsive potentials.  We defer formal statements to Section~\ref{sec:resultsstatement}.
		
		Our approach is via Barvinok's interpolation method~\cite{barvinok2016combinatorics} combined with use of the cluster expansion for Gibbs point processes. This allows us to work with the Gibbs process directly, rather than a discrete approximation of the process. By combining our result with the zero-free region for stable potentials guaranteed by the cluster expansion, we obtain what appears to be the first algorithmic result for stable, non-repulsive potentials. In the special case of a repulsive potential, combining our result with previous work of the second author and Perkins on zero-freeness \cite{mp-CC} yields the first quasipolynomial time deterministic approximation algorithm for a large class of repulsive potentials (which includes the hard-sphere model) and range of parameters. For the hard-sphere potential $\phi$, we note that \cite[Lemma 12]{mp-CC} gives an explicit bound of $\Delta_\phi < C_\phi(1 - 8^{-d-1})$ where $\Delta_\phi, C_\phi$ denote the potential-weighted connective constant and temperedness constant of $\phi$ respectively (defined in the next sections).  This demonstrates that our algorithm works for a wider range of parameters than the previous deterministic algorithms of~\cite{friedrich2022algorithms}.  Additionally, \cite{friedrich2021spectral} argues that the connective constant of the discretization used in \cite{friedrich2021spectral,friedrich2022algorithms} would not provide an improvement to their results; as such, working in the continuum and using the zero-freeness result of \cite{mp-CC} gets around this issue.  
	
	\subsection{Formal definition of the model}
	
	The point processes we consider are defined by three parameters: 
	\begin{itemize}
	\item a measurable set $S \subset \R^d$ of finite volume,
	\item a parameter $\lambda \geq 0$ referred to as the \emph{activity} or \emph{fugacity},
	\item a \emph{pair potential} $\phi:  \R^d \to \R \cup \{+ \infty\}$ satisfying $\phi(x) = \phi(-x)$. 
	\end{itemize}

	The \emph{temperedness constant} of a potential is defined as \begin{equation}\label{eq:Cphidef}
		C_\phi = \int_{\R^d} |1 - e^{-\phi(x)}|\,dx\,.
	\end{equation}
	The temperedness constant may be understood as a measure of the strength of the potential.  We say that a potential $\phi$ is \emph{tempered} if $C_\phi < \infty$, {and always assume that $\phi$ is tempered}.  The \emph{energy} of a configuration of points $\{x_1,\ldots,x_{N}\} \subset \R^d$ is defined by \begin{equation}
		H(x_1,\ldots,x_N) = \sum_{1 \leq i < j \leq N} \phi(x_i - x_j)\,.
	\end{equation}

	{We will always assume that $\phi$ is \emph{stable}, meaning that there is a constant $B \geq 0$ so that for all $N$ and $x_1,\ldots,x_N$ we have \begin{equation}\label{eq:stable-def}
			H(x_1,\ldots,x_N) \geq - BN\,.
		\end{equation}
The infimum over such $B$ is called the	\emph{stability constant} of $\phi$.  The assumption of stability is used to show that the partition function---and thus the Gibbs point process---itself is well-defined.  Under certain conditions on $\phi$, the assumption of stability is a necessary condition for the point process to be well-defined (see \cite[Section 3.2]{ruelle1999statistical} for a detailed discussion and many examples).}
    {A potential $\phi$ is \emph{repulsive}, if $\phi(x) \geq 0$ for all $x$. In particular, repulsive potentials are stable} {with stability constant $B = 0$}. 
	
	The \emph{Gibbs point process} in $S$ with potential $\phi$ at activity $\lambda$ is the point process in $S$ whose density against the Poisson point process of activity $\lambda$ is proportional to $e^{-H(x_1,\ldots,x_N)}$. 
	The \emph{grand canonical partition function} at activity $\lambda$ is defined by \begin{equation} \label{eq:Z-def}
		Z_S(\lambda) = \sum_{ k \geq 0} \frac{\lambda^k}{k!} \int_{S^k} e^{-H(x_1,\ldots,x_k)}\,dx_1\,\ldots\,dx_k\,.
	\end{equation}
	Throughout, we work with the case of $S = \Lambda_n:= [-n,n]^d \subset \R^d$, i.e.\  the axis parallel box of side-length $2n$. One of the most studied examples of a Gibbs point process is the \emph{hard sphere model} which is defined by the potential
\begin{align}\label{eq:hardpotential}
\phi(x)=
\begin{cases*}
                    +\infty & if  $\|x\|_2 < r$  \\
                    0 & otherwise\, ,
                 \end{cases*} 
\end{align}
for fixed $r>0$. The hard sphere model is supported on configurations $\{x_1,\ldots,x_{N}\}$ such that $H(x_1,\ldots,x_N) =0$ i.e. configurations consisting of the centers of a packing of spheres of radius $r/2$.   
	
	{Another similar example is the \emph{Strauss potential}, in which the $+\infty$ in the definition of the hard sphere potential is replaced with a parameter $a > 0$.  }
	
	{Among the most common examples of a stable potential that is not repulsive is a \emph{Lennard-Jones potential} (see \cite[Section 3.2.10]{ruelle1999statistical}).  While there are many examples of potentials that are called Lennard-Jones potentials, they are characterized by being strongly repulsive at short distances and weakly attractive at far distances.  A large family of widely used potentials is of the form $$\phi(x) = A \|x\|_2^{-2\alpha} - B\|x\|_2^{-\alpha} $$ for parameters $A,B > 0$ and $\alpha > d$, where we recall $d$ is the underlying dimension.  One may also truncate this potential to be of finite range by simply setting it equal to $0$ if $\|x\|_2 \geq T$ for some parameter $T$, and this still yields a stable potential. }

	\subsection{Statement of results}\label{sec:resultsstatement}

	Our results will require only two additional assumptions on the potential: first a basic assumption on its form so that we may approximately compute certain volumes; and second a zero-freeness assumption. {The zero-freeness assumption we use for stable (non-repulsive) potentials is a classical result that follows from the cluster expansion \cite{ruelle1999statistical,procacci2017convergence}, while in the repulsive case we will use} the work of the second author and Perkins~\cite{mp-CC}.  We begin with the assumption required for computational purposes.
	\begin{assumption}\label{assumption:potential}
		There are compact {centrally symmetric} convex sets $K_1 \subset K_2 \subset \ldots \subset K_\ell$ so that the function $x\mapsto \exp(-\phi(x))$ is $L$-Lipschitz on each of $\Delta_j:= K_{j} \setminus K_{j-1}$.  Additionally assume that there is an $R>0$ so that $\supp(\phi) \subset K_\ell \subset [-R,R]^d$ and $[-1/R,1/R]^d \subset K_1$.  	We work in a real-valued model of computation, and assume unit cost for elementary operations; we also assume that evaluation of { $e^{-\phi}$} as well as checking if a point lies in a given $K_j$ have unit cost.
	\end{assumption}
	{
	Assumption \ref{assumption:potential} essentially has two elements: it assumes that the potential $\phi$ has a certain amount of (piecewise) regularity, and also assumes that $\phi$ is of finite range.  We are not aware of any natural potential $\phi$ in the literature that fails to satisfy the piecewise regularity portion of the assumption. }{There are however many natural potentials that are not of finite range.}

	We remark that the hard sphere potential~\eqref{eq:hardpotential}, {the Strauss potential, and the truncated Lennard-Jones potential are} easily seen to satisfy Assumption~\ref{assumption:potential}.\\
	
		\noindent\textbf{A note on asymptotic notation.} Throughout the paper we think of the parameters $d, \ell, L$ and $R$ from Assumption~\ref{assumption:potential} as fixed and allow the implicit constants in our asymptotic notation to depend on these parameters. \\
	
	Our main theorem is that we obtain deterministic approximation algorithms for $Z_S(\lambda)$ for potentials satisfying Assumption~\ref{assumption:potential} and the following zero-freeness assumption on $Z_S(\lambda)$.  
	
	\begin{assumption}\label{assumption:zero-free}
		We say a {stable} potential $\phi$ satisfies Assumption \ref{assumption:zero-free} at $\lambda_0 > 0$ if there exist constants $\delta, C > 0$ so that the following holds.  For all bounded, measurable $S \subset \R^d$ we have \begin{equation}
		Z_S(\lambda) \neq 0 \text{\,\, and\,\, }	\frac{1}{|S|} |\log Z_S(\lambda)| \leq C \qquad \text{ for all }\lambda \in \mathcal{N}_\delta([0,\lambda_0]) \,,
		\end{equation}
		where
			\[
			\mathcal{N}_\delta([0,\lambda_0]) = \{z\in \mathbb C :  d(z, [0,\lambda_0]) < \delta \}\, .
			\]
	\end{assumption}

	Assumption \ref{assumption:zero-free} may be understood as saying that the Gibbs point process exhibits no phase transition in the regime $[0,\lambda_0]$ in the Lee-Yang sense (see Section \ref{sec:context}).
	
	Given a number $Z$, an $\eps$-approximation to $Z$ is a value $\hat{Z}$ so that $e^{-\eps}\hat{Z} \leq Z \leq e^{\eps} \hat{Z}$.  Our main theorem asserts a quasipolynomial-time approximation for the partition function under these two assumptions.  Recall that $\Lambda_n:= [-n,n]^d \subset \R^d$.

	\begin{theorem}\label{th:main}
		Let $\phi$ be a {stable} pair potential that satisfies the regularity Assumption \ref{assumption:potential}. Suppose $\phi$ satisfies the zero-freeness Assumption \ref{assumption:zero-free} at $\lambda \geq 0$.  Let $\eps\in (0,1)$, $n\in \N$. Then there is a deterministic algorithm to compute an $\eps$-approximation to $Z_{\Lambda_n}(\lambda)$ with runtime at most $\exp(O(\log^3(|\Lambda_n|/\eps)))$. 	\end{theorem}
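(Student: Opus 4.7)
The plan is to implement Barvinok's interpolation method on top of the cluster expansion of $Z_{\Lambda_n}$.

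First, I would invoke the standard Barvinok reduction. From the zero-freeness of $Z_{\Lambda_n}$ throughout $\mathcal{N}_\delta([0,\lambda])$ together with the linear-in-volume bound $|\log Z_{\Lambda_n}(z)| \le C|\Lambda_n|$ in Assumption~\ref{assumption:zero-free}, a conformal map sending $\mathcal{N}_\delta([0,\lambda])$ into the unit disk makes $\log Z_{\Lambda_n}$ a bounded analytic function on the disk with sup-norm $O(|\Lambda_n|)$. Its degree-$m$ truncated Taylor polynomial therefore approximates $\log Z_{\Lambda_n}(\lambda)$ to additive error $O(|\Lambda_n|\cdot e^{-cm})$, so choosing $m = O(\log(|\Lambda_n|/\eps))$ reduces the theorem to computing the first $m$ Taylor coefficients $p_1,\ldots,p_m$ of $\log Z_{\Lambda_n}$ at $0$ to additive precision roughly $\eps/(m|\Lambda_n|)$.

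Second, I would identify $p_k$ with the $k$-th cluster-expansion coefficient, writing
\[
p_k \;=\; \int_{\Lambda_n^k} \varphi^T(x_1,\ldots,x_k)\,dx_1\cdots dx_k,
\]
where $\varphi^T$ is the Ursell function, a signed sum over connected graphs $G$ on $[k]$ of products $\prod_{\{i,j\}\in E(G)}(e^{-\phi(x_i-x_j)}-1)$. Since each Mayer factor vanishes on $\{\phi(x_i-x_j)=0\}$, any contributing configuration has connected dependency graph, and by $\supp\phi \subset [-R,R]^d$ all of $x_2,\ldots,x_k$ then lie within distance $kR$ of $x_1$. Translation invariance reduces the computation of $p_k$ to $|\Lambda_n|$ times a cluster integral $I_k$ over a bounded region of diameter $O(k)$, plus a boundary contribution of lower order that can be bounded separately using Assumption~\ref{assumption:zero-free}.

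Third, I would approximate $I_k$ by a Riemann sum over a sufficiently fine grid. Assumption~\ref{assumption:potential} partitions the integration domain into tuples of "rings" $\Delta_{j_1} \times \cdots \times \Delta_{j_{k-1}}$ on each of which the integrand is Lipschitz, with constant growing at worst like $\exp(O(k^2))$ from the $\binom{k}{2}$ Mayer factors and the sum over connected graphs. A grid of spacing $\exp(-O(k^2))\cdot \eps/(m|\Lambda_n|)$ suffices for the required accuracy, producing at most $\exp(O(k^3))$ grid points after accounting for the $k$-fold product of dimension-$d$ cubes. At each grid point the Ursell sum can be evaluated in $\exp(O(k))$ time via either the Penrose / Brydges--Federbush tree-graph identity or by Patel--Regts-style enumeration of induced connected subgraphs of the discretized dependency graph, whose effective degree is controlled by the support and Lipschitz constants from Assumption~\ref{assumption:potential}. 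Summing the $\exp(O(k^3))$ per-coefficient cost over $k \le m = O(\log(|\Lambda_n|/\eps))$ gives total runtime $\exp(O(\log^3(|\Lambda_n|/\eps)))$, as claimed.

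The main obstacle is controlling the per-coefficient cost: the Ursell function is naively a sum of $2^{\binom{k}{2}}$ terms, and a $k$-dimensional integral discretized finely enough to pass through the Barvinok error bound is expensive. The whole argument hinges on simultaneously exploiting (i) the exponential decay of cluster weights implied by Assumption~\ref{assumption:zero-free}, which lets us tolerate a moderate precision, and (ii) the bounded support and Lipschitz structure of $e^{-\phi}$ from Assumption~\ref{assumption:potential}, which tames both the integration and the enumeration of cluster shapes.
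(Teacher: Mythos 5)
Your outline matches the paper's approach quite closely: Barvinok interpolation to handle $\lambda$ outside the cluster-expansion radius of convergence, identification of the Taylor coefficients of $\log Z_{\Lambda_n}$ with the cluster integrals $C_k(\Lambda_n)$, and approximation of each $C_k$ by a Riemann sum on a grid fine enough that (by the ring structure of Assumption~\ref{assumption:potential}) the integrand is piecewise Lipschitz. The final runtime accounting is also the same. So the proposal is the right strategy, but a few of the steps are stated loosely enough to hide the actual work.

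The main point you elide is the passage from the Taylor coefficients of the \emph{composed} function to those of $\log Z_{\Lambda_n}$. If $\Phi$ is a general conformal map of the disk into $\cN_\delta([0,\lambda])$, then the coefficients of $g=f\circ\Phi$ are not directly the cluster coefficients, and one cannot simply ``compute the first $m$ Taylor coefficients of $\log Z_{\Lambda_n}$'' and stop. The paper (Theorem~\ref{th:reduction-cluster}) uses Barvinok's observation that $\Phi$ can be taken to be a \emph{polynomial}; then Fa\`a di Bruno's formula expresses $g^{(j)}(0)/j!$ as a finite linear combination of $f^{(i)}(0)/i!$ with Bell-polynomial coefficients, which Lemma~\ref{lem:bell-bound} bounds by $e^{O(k)}$. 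This $e^{O(k)}$ amplification is why the required precision on the cluster coefficients is $(\eps/|\Lambda_n|)^{O(1)}$ rather than your $\eps/(m|\Lambda_n|)$. It does not change the quasipolynomial bottom line, but it is a genuine step, not bookkeeping. Separately, your claim that the Ursell sum can be evaluated at a grid point in $\exp(O(k))$ time via the Penrose identity or Patel--Regts enumeration is both overoptimistic (there are $k^{k-2}$ trees) and unnecessary: the paper just sums over all $2^{O(k^2)}$ graphs and $\ell^{O(k^2)}$ edge-labellings, and the $\exp(O(k^3))$ cost of the discretized integral dominates anyway. Finally, your ``translation invariance plus a boundary contribution bounded via Assumption~\ref{assumption:zero-free}'' remark doesn't correspond to anything that is needed: the paper bounds the grid directly via a spanning tree and the support bound $\supp\phi\subset[-R,R]^d$, and there is no bulk/boundary decomposition (nor is Assumption~\ref{assumption:zero-free} the sort of statement that would yield one).
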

	
	{We note that the implicit constant in the exponent depends on the potential $\phi$ as well as the activity $\lambda$.}
	
	{We will deduce two main corollaries from Theorem \ref{th:main}.  The first of which deduces an algorithm for all stable potentials in the regime of cluster expansion convergence.  The fact that for $\lambda < (e^{1+2B} C_\phi)^{-1}$ a stable potential $\phi$ satisfies Assumption \ref{assumption:zero-free} is a direct consequence of the cluster expansion (see, e.g., \eqref{eq:C_k-def} or \cite[Thm 4.2.3]{ruelle1999statistical}).}
	{ This classical result was recently improved by  Procacci and Yuhjtman~\cite{procacci2017convergence} who showed that any stable tempered potential $\phi$ satisfies Assumption \ref{assumption:zero-free} at all $\lambda < (e^{1+B} \hat C_\phi)^{-1}$, where
	\begin{align}\label{eqHatTemp}
	\hat C_\phi = \int_{\R^d} 1 - e^{-|\phi(x)|}\,dx\,.
	\end{align}
	We note that $\hat C_\phi\leq C_\phi$ for all $\phi$ with equality if $\phi$ is repulsive.
	 }
	
{	\begin{corollary}
			Let $\phi$ be a stable tempered potential satisfying Assumption \ref{assumption:potential}, {let $B$ denote its stability constant and let $\hat C_\phi$ be as in~\eqref{eqHatTemp}.} Let $\eps\in (0,1)$, $n\in \N$. Then for all {$\lambda < (e^{1+B} \hat C_\phi)^{-1}$}, there is a deterministic algorithm to compute an $\eps$-approximation to $Z_{\Lambda_n}(\lambda)$ with runtime at most $\exp(O(\log^3(|\Lambda_n|/\eps)))$. 	
	\end{corollary}
}
	
	{Our second corollary applies for repulsive potentials---i.e.\ the case of $B = 0$---for a much wider range of parameters.}	To apply Theorem \ref{th:main} {in the repulsive case}, we will use pre-existing work on zero-freeness of Gibbs point process partition functions.  In \cite{mp-CC}, the potential-weighted connective constant $\Delta_\phi$ was introduced, which captures a relationship between the strength of the potential and the geometry of the underlying space.
	
	First, define $V_k$ via \begin{equation}
		V_k = \int_{(\R^d)^k} \prod_{j = 1}^k \left[ \exp\left( - \sum_{i = 0}^{j - 2} \one_{\| v_j - v_i\| < \|v_i - v_{i+1}\| } \phi(v_j - v_i) \right) \cdot (1 - e^{-\phi(v_j - v_{j-1})}) \right]\,d\mathbf{v}
	\end{equation}  
	where we write $d\mathbf{v} = dv_1\,dv_2\,\ldots \,d v_k$ and interpret $v_0 = 0$, and in the case of $j = 1$ interpret the empty sum as equal to $0$.  Since the potential $\phi$ is repulsive, the sequence $\{V_k\}_{k \geq 1}$ is submultiplicative and so we may define the \emph{potential-weighted connective constant} $\Delta_\phi$ via \begin{equation}
		\Delta_\phi = \lim_{k \to \infty} V_k^{1/k} = \inf_{k \geq 1} V_k^{1/k}\,.
	\end{equation}
	
	For any repulsive potential $\phi$ we have $\Delta_\phi \leq C_\phi$ where $C_\phi$ is the temperedness constant defined at~\eqref{eq:Cphidef}. So long as $\phi$ is non-trivial we in fact have $\Delta_\phi < C_\phi$.  In \cite{mp-CC} it was shown that Assumption \ref{assumption:zero-free} is satisfied for each tempered repulsive potential for all $\lambda_0 < e/\Delta_\phi$.  
	We therefore obtain the following immediate corollary.   
	
	\begin{corollary}
		Let $\phi$ be a repulsive tempered potential satisfying Assumption \ref{assumption:potential} and let $\Delta_\phi$ denote its potential-weighted connective constant. Let $\eps\in (0,1)$, $n\in \N$. Then for all $\lambda < e/\Delta_\phi$, there is a deterministic algorithm to compute an $\eps$-approximation to $Z_{\Lambda_n}(\lambda)$ with runtime at most $\exp(O(\log^3(|\Lambda_n|/\eps)))$. 	\end{corollary}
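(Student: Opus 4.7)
The plan is to deduce this corollary as a direct consequence of Theorem~\ref{th:main}, with the only work being to verify the zero-freeness hypothesis (Assumption~\ref{assumption:zero-free}) at the given activity $\lambda$. This is precisely what the paper \cite{mp-CC} of the second author and Perkins provides: for any tempered repulsive potential $\phi$ and any $\lambda_0 < e/\Delta_\phi$, they establish that the partition function $Z_S(\lambda)$ is nonzero on the complex neighborhood $\mathcal{N}_\delta([0,\lambda_0])$ and that $|S|^{-1}|\log Z_S(\lambda)|$ is uniformly bounded there, independently of the bounded measurable set $S \subset \R^d$. This is exactly the content of Assumption~\ref{assumption:zero-free}.

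More concretely, given $\lambda < e/\Delta_\phi$, I would first pick an auxiliary $\lambda_0 \in (\lambda, e/\Delta_\phi)$ (which exists because the gap $e/\Delta_\phi - \lambda$ is strictly positive), and then invoke \cite{mp-CC} at $\lambda_0$ to obtain constants $\delta = \delta(\phi, \lambda_0) > 0$ and $C = C(\phi, \lambda_0) > 0$ witnessing Assumption~\ref{assumption:zero-free} along the entire segment $[0,\lambda_0]$. Since $\lambda \in [0,\lambda_0]$, the assumption holds in particular at $\lambda$. All of the constants $\lambda_0, \delta, C$ depend only on $\phi$ and on the gap to $e/\Delta_\phi$, so they are absorbed into the implicit constants of the asymptotic notation (which, as stated in the paper, may also depend on $d, \ell, L, R$ from Assumption~\ref{assumption:potential}).

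With Assumption~\ref{assumption:zero-free} verified at $\lambda$ and Assumption~\ref{assumption:potential} holding by the hypotheses of the corollary, Theorem~\ref{th:main} applies directly and delivers a deterministic algorithm computing an $\eps$-approximation to $Z_{\Lambda_n}(\lambda)$ in time $\exp(O(\log^3(|\Lambda_n|/\eps)))$. There is no essential obstacle in the corollary itself: the substantive work lives in Theorem~\ref{th:main} (the combination of cluster-expansion coefficient computation with Barvinok's interpolation method) and in \cite{mp-CC} (the zero-freeness region derived from the potential-weighted connective constant), both of which are cited as inputs. The only minor point to check is that the strict inequality $\lambda < e/\Delta_\phi$ gives room to choose $\lambda_0$ strictly larger than $\lambda$, so that $\lambda$ sits in the interior of the zero-free segment and not merely on its boundary; this is automatic from the hypothesis.
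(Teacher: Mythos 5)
Your proof matches the paper's own (implicit) argument: the corollary is obtained by invoking the zero-freeness result of \cite{mp-CC}, which gives Assumption~\ref{assumption:zero-free} for every $\lambda_0 < e/\Delta_\phi$, and then applying Theorem~\ref{th:main}. The only difference is your extra step of choosing an auxiliary $\lambda_0 \in (\lambda, e/\Delta_\phi)$, which is harmless but unnecessary since Assumption~\ref{assumption:zero-free} already demands a $\delta$-neighborhood of the full segment $[0,\lambda_0]$, so one may simply take $\lambda_0 = \lambda$.
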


	\subsection{Context and related work}\label{sec:context}
	
	Much of the classical work on Gibbs point processes has consisted of showing the absence of a phase transition when $\lambda$ is small.  There are many  definitions and notions of a phase transition.  One of the most robust definitions is due to Lee and Yang \cite{lee1952statistical,yang1952statistical}, which states that a phase transition is a point at which the pressure fails to be analytic.  Often, various other definitions of phase transitions, e.g. in terms of infinite-volume Gibbs measures, can be shown to coincide with the Lee-Yang definition (see, e.g., \cite{dobrushin1985completely} for some rigorous equivalences in the discrete case).  
	
	It remains a major open problem to demonstrate---or rule out---a phase transition for even a single non-trivial pair potential $\phi$\footnote{The first example of a continuous system for which a phase transition was proven was the Widom-Rowlinson model, a result due to Ruelle\cite{ruelle1971widomrowlinson}. However this does not fit into the framework of indistinguishable particles that we consider in this paper.}.  Groenveld \cite{groeneveld1962two} used the cluster expansion to prove that no phase transition occurs for $\lambda < 1/(eC_\phi)$ for repulsive potentials; this was extended to the broader class of stable potentials by Penrose \cite{penrose1963convergence} and Ruelle \cite{ruelle1963correlation}.  For repulsive potentials, works of Meeron \cite{meeron1962indirect,meeron1970bounds} extends this regime to $1/C_\phi$.  Taking inspiration from Weitz's groundbreaking work \cite{Weitz} on the hard-core model---a discrete repulsive spin system---the work of the second author and Perkins~\cite{michelen2020analyticity} proved that there is no phase transition for $\lambda < e/C_\phi$.  This was extended further in \cite{mp-CC} which showed that $C_\phi$ may be replaced with the potential-weighted connective constant $\Delta_\phi$.  
	
	In practice, various Markov chain Monte Carlo algorithms are used to sample from Gibbs point processes, both approximately and exactly.  There is a large literature dedicated to this study, see for example~\cite{metropolis1953equation,alder1957phase,Krauth,propp1996exact,haggstrom1999characterization,garcia2000perfect,moller2001review,huber2016perfect,preston1975spatial,moller1989rate,friedrich2021spectral,michelen2022strong}.
	We refer to \cite{michelen2022strong} and the references therein for more context and background on these results; we also note that \cite{michelen2022strong} demonstrates a fast mixing approximate sampling algorithm and polynomial-time \emph{random} approximation algorithm for a repulsive Gibbs point process in the regime $\lambda < e/\Delta_\phi$, the same regime in which the results of this work hold.

	When it comes to deterministic algorithms, there are several results for discrete systems. In this setting, three approaches for obtaining approximation algorithms have emerged in recent years. The first is due to Weitz~\cite{Weitz} who pioneered an approach based on a notion of \emph{correlation decay} (strong spatial mixing) related to the absence of phase transitions. The second is the \emph{interpolation method} introduced by Barvinok~\cite{barvinok2016combinatorics}--a framework for showing that under a zero-freeness assumption, one can approximate the logarithm of a polynomial using a small number of Taylor coefficients (see also \cite{patel2017deterministic} for an important extension of this method). The third is a method based on the \emph{cluster expansion} in statistical physics pioneered by Helmuth, Perkins and Regts~\cite{helmuth2020algorithmic} that is closely related to Barvinok's method.

	For Gibbs point processes, to our knowledge the only rigorous result giving a deterministic algorithm is that of Friedrich, G\"obel, Katzmann, Krejca and Pappik \cite{friedrich2022algorithms}. They show that for the hard-spheres model, one may approximately compute the partition function in quasipolynomial time for $\lambda < e/C_\phi$.  The approach of \cite{friedrich2022algorithms} works by showing that one may approximate the partition function of the hard-sphere model with the partition function for the hard-core model on a graph given by discretizing Euclidean space with a small mesh.  After this approximation is in place, an application of Weitz's method \cite{Weitz} provides an algorithm.  We note also that \cite{friedrich2022algorithms} applies not only to the hard-sphere model, but also to a class of multi-type Gibbs point processes with hard constraints, an example of this more general class being the Widom-Rowlinson model.  Additionally, while both the algorithm of \cite{friedrich2022algorithms} and Theorem \ref{th:main} have quasipolynomial runtime, the algorithm of \cite{friedrich2022algorithms} runs in time $\exp(O( \log^2 (|\Lambda_n|/\eps))$ rather than our runtime of $\exp(O( \log^3 (|\Lambda_n|/\eps)).$ 
	
	{To our knowledge, Theorem \ref{th:main} marks the first approximation algorithm of any kind for the partition function for stable, non-repulsive potentials.}

	\subsection{Our approach}
	Our approach is via Barvinok's interpolation method combined with use of the cluster expansion.  The cluster expansion---also called the Meyer series---is a combinatorial description of the Taylor coefficients of $\log Z_S(\lambda)$.  In particular for bounded, measurable $S\subset \R^d$ and {$|\lambda| < (e^{1+B} \hat C_\phi)^{-1}$} we have 
	\begin{align}
		\log Z_S(\lambda) &= \sum_{k \geq 1} \frac{\lambda^k}{k!} C_k(S) \\
		C_k(S) &= \sum_{G \in \mathcal{G}_k} \int_{S^k} \prod_{\{i,j\} \in E(G)} (e^{-\phi(x_i-x_j)} - 1) \, d \bx \label{eq:C_k-def}
	\end{align}
	where $\mathcal{G}_k$ is the set of connected labeled graphs with $k$ vertices and $E(G)$ is the set of edges in a graph $G$ (see, e.g.,  \cite{procacci2017convergence} for this and similar expansions). 
	
	The algorithmic significance of the cluster expansion stems from two observations. The first is that in order to compute $\frac{1}{|S|}\log Z_S(\lambda)$ up to an additive error of $\eps$, it suffices to compute the first $O(\log (|S|/\eps))$ coefficients $C_k(S) $. This simple but crucial observation is at the heart of Barvinok's interpolation method and algorithmic applications of the cluster expansion. The second is that the description of the Taylor coefficients as a sum over \emph{connected} graphs allows us to approximate them efficiently. 	 
	
	At first sight, it seems that the use of the cluster expansion restricts the range of $\lambda$ for which we can obtain algorithms to the interval {$(0, ( e^{1+B} \hat C_\phi)^{-1})$} where the cluster expansion is known to converge. Moreover, the work of Groeneveld~\cite{groeneveld1962two} and Penrose \cite{penrose1963convergence} implies that for a repulsive potential the radius of convergence of the cluster expansion is at most $1/C_\phi$ (see Remark 3.7 in~\cite{nguyen2020convergence}).  In order to use the cluster expansion to produce an algorithm throughout the zero-free region (which {for repulsive potentials is known to include} values of $\lambda$ \emph{outside} of the radius of convergence), we use an idea of Barvinok and apply a well-chosen polynomial to map between the zero-free region and the disk. This is handled in Section \ref{sec:conformal}.  
	
	The main technical contribution of this work is to approximate the coefficients of the cluster expansion~\eqref{eq:C_k-def} in quasipolynomial time.  We note that the problem of approximating coefficients of the cluster expansion was pointed out in \cite{friedrich2022algorithms} as a central obstacle to obtaining an efficient deterministic algorithm that works entirely in the continuum.  To approximate the coefficients $C_k(S)$, the main challenge is that for each graph $G \in \mathcal{G}_k$, the integrand in \eqref{eq:C_k-def} need not be well-behaved; in particular, it need not be Lipschitz.  To handle this, we break each such term into many subsequent terms, each of which will give us an integral of a function that is Lipschitz over its support.  We then approximate each of these integrals by taking a sufficiently fine mesh and comparing the integral to a weighted sum over this mesh.
	 {Our assumption that the potential $\phi$ has finite range allows us to restrict our attention to approximating integrals over regions whose volume is independent of $n$. It would be interesting to extend the results of this paper to include stable potentials of infinite range.}

	\section{Approximating coefficients in the cluster expansion}
	Throughout this section, we fix a {stable} potential $\phi$ with {with stability constant $B$} that satisfies Assumption~\ref{assumption:potential}. Recall that $\Lambda_n:= [-n,n]^d \subset \R^d$. 
Our goal in this section is to provide an approximation algorithm for $C_k(\Lambda_n)$ (as defined at~\eqref{eq:C_k-def}).  In the next section we show how to use these approximate coefficients to arrive at an approximation of the partition function $Z_{\Lambda_n}(\lambda)$. 
	\begin{prop}\label{pr:compute-cluster}
		Let $\phi$ be a {stable} potential satisfying Assumption \ref{assumption:potential}.  There are constants $C, c > 0$ depending only on $\phi$ and the dimension $d$ so that the following holds.   For each $\eps \in (0,1)$ and $k,n \in \N$ we may approximate $C_k(\Lambda_n)/|\Lambda_n|$ up to an additive error of $\eps$ in time at most $C |\Lambda_n| \eps^{-dk} e^{ck^3}$.
	\end{prop}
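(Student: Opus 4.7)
The plan is to expand
\[
C_k(\Lambda_n) = \sum_{G \in \mathcal{G}_k} I_G(\Lambda_n), \quad I_G(\Lambda_n) := \int_{\Lambda_n^k} f_G(\bx)\,d\bx,\quad f_G(\bx) := \prod_{\{i,j\}\in E(G)}(e^{-\phi(x_i-x_j)}-1),
\]
and to approximate each $I_G(\Lambda_n)$ by a Riemann sum on a mesh of side $h$ in $\Lambda_n$. Two preliminary observations make the running time manageable. First, $|\mathcal{G}_k| \le 2^{\binom{k}{2}} = e^{O(k^2)}$ and these graphs can be enumerated in time $e^{O(k^2)}$. Second, because $\supp(\phi) \subset [-R,R]^d$ and $G$ is connected, $f_G(\bx)\neq 0$ forces each $x_i$ to lie in the box $x_1 + [-kR,kR]^d$. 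Consequently, after fixing an ``anchor'' mesh point $p_1\in\Lambda_n$ and looping over $(p_2,\ldots,p_k)$ in the at most $(2kR/h)^{d(k-1)}$ nearby mesh configurations, we capture all nonzero Riemann terms, giving $(|\Lambda_n|/h^d)\cdot(2kR/h)^{d(k-1)}$ evaluations per graph.

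The main obstacle is that $f_G$ is not globally Lipschitz: each factor $e^{-\phi(x)}-1$ may jump across the boundaries $\partial K_j$ of the convex sets supplied by Assumption~\ref{assumption:potential}. I handle this by (conceptually) partitioning $\Lambda_n^k$ according to which shell $\Delta_t := K_t\setminus K_{t-1}$ each edge difference $x_i-x_j$ lies in, producing at most $\ell^{|E(G)|}\le e^{O(k^2)}$ pieces per graph. On each piece $f_G$ is a product of $L$-Lipschitz factors bounded by $1$, hence is $O(Lk^2)$-Lipschitz, so the Riemann error on a mesh cube entirely inside a single piece is $O(h\cdot Lk^2)$ times its volume. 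For the remaining ``boundary cubes,'' those in which some edge difference lies within $h\sqrt{d}$ of some $\partial K_t$, I use the crude bound $|f_G|\le 1$: by a codimension-one argument, the total measure of configurations satisfying $\dist(x_i-x_j,\partial K_t)\le h\sqrt{d}$ for any of the $k^2\ell$ choices of $(\{i,j\},t)$ is $O(h)$ times the support volume, itself bounded by $|\Lambda_n|\cdot(2kR)^{d(k-1)}$. Summing both contributions and then over the $e^{O(k^2)}$ graphs yields total approximation error $O(h)\cdot |\Lambda_n|\cdot e^{O(k^2)}$.

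Setting $h = \eps\cdot e^{-c' k^2}$ for a sufficiently large constant $c'$ forces the total additive error for $C_k(\Lambda_n)/|\Lambda_n|$ to be at most $\eps$. The operation count is
\[
|\mathcal{G}_k|\cdot \frac{|\Lambda_n|}{h^d}\cdot \left(\frac{2kR}{h}\right)^{d(k-1)}\cdot O(k^2)
\;=\; |\Lambda_n|\cdot \eps^{-dk}\cdot e^{O(k^3)},
\]
since $h^{-dk} = \eps^{-dk}e^{O(k^3)}$ supplies the cubic-in-$k$ exponent while the graph count and the $(2kR)^{d(k-1)}$ spatial factor contribute only $e^{O(k^2)}$ and $e^{O(k\log k)}$ respectively. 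The step I anticipate to be the most delicate is the boundary-cube accounting sketched above; should the naive tubular-neighborhood bound prove insufficient, I will instead \emph{explicitly} enumerate the $\ell^{|E(G)|}$ pieces for each graph and apply the interior Lipschitz Riemann estimate to each separately, paying an extra $e^{O(k^2)}$ enumeration factor that is already absorbed within the $e^{O(k^3)}$ target runtime.
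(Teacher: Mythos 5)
Your proposal is correct and mirrors the paper's proof in essentially every respect: the decomposition over connected graphs, the further (conceptual or explicit) decomposition by which shell $\Delta_t$ each edge difference lies in, the spanning-tree/anchor argument confining the support to volume $|\Lambda_n|\cdot O(R)^{d(k-1)}$, the Lipschitz estimate on interior mesh cubes, the tubular-neighborhood bound on the boundary contribution (which the paper makes rigorous via the scaling $(1\pm\gamma)K_t$ and a spanning tree through the offending edge), and the choice $h\asymp\eps e^{-O(k^2)}$ yielding the $|\Lambda_n|\eps^{-dk}e^{O(k^3)}$ runtime. The ``fallback'' you mention---explicitly enumerating the $\ell^{|E(G)|}$ edge-labellings---is in fact exactly what the paper does, so there is no meaningful divergence between the two approaches.
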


	Note that Proposition \ref{pr:compute-cluster} makes no use of a zero-freeness or correlation decay type assumption.  
	
	In order to prove Proposition \ref{pr:compute-cluster}, we will approximate each summand in \eqref{eq:C_k-def}.   Further, we will break up the term for each graph into the terms that are Lipschitz on their support.  The main work of this section is Lemma \ref{lem:discretize} below, which provides an approximation for each of these Lipschitz terms.  Before stating this precisely, we make a few definitions. 
	
	\begin{defn}
		{Let $\Delta_1,\ldots, \Delta_\ell$ be defined as in Assumption~\ref{assumption:potential}}.
		\begin{enumerate}
			\item
			Given a connected graph $G$ on $k$ vertices, an \emph{edge-labelling} is a function $\sigma: E(G) \to {= \{1, \ldots, l\}}$.  
			\item
			For each labelling $\sigma$ and edge $\{i,j\} \in E(G)$, define the function $f_{i,j}^\sigma : (\R^d)^k \to {\mathbb R}$ via 
			\[f_{i,j}^{\sigma}(\bx) := (e^{-\phi(x_i - x_j)} - 1) \one\{ x_i - x_j \in \Delta_{\sigma(\{i,j\})}  \}\,.\]
			.
			\item
			Define $f^{\sigma}$ via 
			\[ f^{\sigma}(\bx) := \prod_{\{i,j\} \in E(G)} f_{i,j}^{\sigma}(\bx)\, .\]
			\item Define $A_\sigma \subset (\R^d)^k$ via 
			\[
			A_\sigma := \{\bx \in (\R^d)^k : x_i - x_j \in \Delta_{\sigma(\{i,j\}) }\text{ for all }\{i,j\} \in E(G)  \}\, ,
			\]
			and note that $\supp(f^\sigma) \subseteq A_\sigma$.
		\end{enumerate}
	\end{defn}

Recall that $\mathcal{G}_k$ is the set of connected labeled graphs with $k$ vertices. Given a graph $G \in \mathcal{G}_k$ we have
	\begin{align}\label{eq:fsigint}
		\int_{\Lambda_n^k} \prod_{\{i,j\} \in E(G)} (e^{-\phi(x_i - x_j)} - 1) \, d \bx = \sum_{\sigma} \int_{\Lambda_n^k}f^{\sigma}(\bx) d \bx\,,
	\end{align} 
	where the sum ranges over all edge-labellings $\sigma$ of $G$. We note that there are $\ell^{|E(G)|} \leq \ell^{k^2 /2}$ such labellings, and there are at most $2^{k^2/2}$ connected graphs on $k$ vertices. 
	
	With this setup in mind, in order to approximate $C_k(\Lambda_n)$, it is sufficient to approximate the integral of $f^\sigma$ for each labelling $\sigma$ of each graph $G \in \mathcal{G}_k$.  In the next subsection, we will demonstrate such an approximation:

	\begin{lemma}\label{lem:discretize}
		Let $n,k\in \N$. For any $G \in \mathcal{G}_k$, labelling $\sigma$, and $\delta < c_{d,R}$, there is a set of points $\mathcal{S}_{\sigma,\delta} \subset \Lambda_n^k$ with $|\mathcal{S}_{\sigma,\delta}| = O(|\Lambda_n|(2R)^{d(k-1)}\delta^{-dk} )$ so that  we have  $$\left|\int_{\Lambda_n^k} f^\sigma(\bx) \,d\bx -  \sum_{\bx \in \mathcal{S}_{\sigma,\delta}} \delta^{dk} f^\sigma(\bx) \right| =  O(k^2{e^{Bk^2}} \delta  |\Lambda_n|  (2R)^{d(k-1)} )\,. $$
		Further, the set $\cS_{\sigma,\delta}$ may be computed in time $O(k^2|\Lambda_n|(2R)^{d(k-1)}\delta^{-dk} )$.
	\end{lemma}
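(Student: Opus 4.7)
My plan is to construct $\mathcal{S}_{\sigma,\delta}$ as a $\delta$-mesh adapted to the graph structure of $G$, and to bound the error by splitting cubes into \emph{interior} ones (on which $f^\sigma$ is smooth) and \emph{boundary} ones (where the indicators in the definition of $f^\sigma_{i,j}$ can jump). The argument combines a telescoping Lipschitz bound over edges with a tube estimate around $\partial A_\sigma$ using convexity of the $K_j$.

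First I would fix a spanning tree $T \subseteq G$ rooted at vertex $1$ and change variables to $(x_1, y_2, \ldots, y_k)$, where $y_j = x_j - x_{\mathrm{parent}_T(j)}$. In these coordinates the tree-edge constraints defining $A_\sigma$ become product constraints $y_j \in \Delta_{\sigma(e_j)} \subseteq [-R,R]^d$, so $A_\sigma$ is contained in $\Lambda_n \times \Delta_{\sigma(e_2)} \times \cdots \times \Delta_{\sigma(e_k)}$, a set of volume at most $|\Lambda_n|(2R)^{d(k-1)}$. I take $\mathcal{S}_{\sigma,\delta}$ to be the centers of a uniform $\delta$-mesh in these coordinates, keeping only those whose image under the linear change of variables lies in $A_\sigma \cap \Lambda_n^k$. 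Enumeration of the mesh takes time $O(k^2|\Lambda_n|(2R)^{d(k-1)}\delta^{-dk})$ (a factor $k$ for checking each of the $\le\binom{k}{2}$ constraints on each grid point, using Assumption~\ref{assumption:potential}), and the size bound follows from $|\Lambda_n|/\delta^d$ choices for $x_1$ and $O((2R/\delta)^d)$ for each $y_j$.

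Next I estimate the per-cube error. For a cube $Q$ of side $\delta$ whose image $Q'$ lies entirely inside $A_\sigma \cap \Lambda_n^k$, every edge-indicator is identically $1$ on $Q'$, and each factor $f^\sigma_{i,j}(\bx) = e^{-\phi(x_i - x_j)} - 1$ is $O(L)$-Lipschitz in $(x_i,x_j)$ on $\Delta_{\sigma(i,j)}$ with $|f^\sigma_{i,j}| \leq 1$. A telescoping identity then gives
\[
|f^\sigma(\bx) - f^\sigma(\bx_c)| \;\leq\; \sum_{\{i,j\} \in E(G)} |f^\sigma_{i,j}(\bx) - f^\sigma_{i,j}(\bx_c)| \;\leq\; O(Lk^2\delta)
\]
uniformly on $Q'$, so the total interior-cube error is $O(k^2\delta) \cdot \mathrm{vol}(A_\sigma \cap \Lambda_n^k) \leq O(k^2\delta \cdot |\Lambda_n|(2R)^{d(k-1)})$. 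The change-of-variables Jacobian is $1$ because the transformation from $\bx$ to $(x_1, y_2,\ldots,y_k)$ is unimodular.

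The main obstacle is controlling the \emph{boundary cubes}, where the indicator functions in $f^\sigma_{i,j}$ cause discontinuities and the Lipschitz bound fails. For these I use only the trivial bound $|f^\sigma| \leq 1$, contributing at most $2\delta^{dk}$ per cube, and I bound the number of boundary cubes by the volume of an $O(\sqrt{dk}\,\delta)$-neighborhood of $\partial A_\sigma$. The key geometric input is that each $K_j$ is a compact convex subset of $[-R,R]^d$ and hence has surface area $O(R^{d-1})$; consequently the $\eta$-tube around $\partial \Delta_j = \partial(K_j\setminus K_{j-1}) \subseteq \partial K_j \cup \partial K_{j-1}$ has $d$-dimensional volume $O(R^{d-1}\eta)$. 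Applying the spanning-tree volume argument from the first step, but replacing one factor $(2R)^d$ by the tube volume $O(R^{d-1}\eta)$, gives an upper bound of $O(|\Lambda_n|(2R)^{d(k-1)}\eta/R)$ on the measure of the set where a given edge is near its indicator boundary. Summing over the $O(k^2)$ edges and taking $\eta = O(\sqrt{dk}\,\delta)$ yields a boundary-cube contribution of the same order $O(k^2\delta \cdot |\Lambda_n|(2R)^{d(k-1)})$ (absorbing $d,R,L$ and the $\sqrt{dk}$ factor into the implicit constant), which combined with the interior estimate proves the lemma.
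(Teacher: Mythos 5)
Your proof takes a genuinely different route from the paper's, though the high-level strategy (split into a ``good'' Lipschitz region and a small ``bad'' region near indicator boundaries, bound the bad set edge-by-edge via a spanning tree) is the same. The paper stays in the original $\bx$-coordinates, defines a \emph{shrunken} region $U_\gamma$ by replacing each annulus $\Delta_j = K_j\setminus K_{j-1}$ with $(1-\gamma)K_j\setminus(1+\gamma)K_{j-1}$, takes $\cS_{\sigma,\delta}=U_\gamma\cap((\delta\Z)^d)^k$, and bounds $|\Delta_j\setminus\Delta_j^{(\gamma)}|$ by the exact scaling identity $|K_j|(1-(1-\gamma)^d)+|K_{j-1}|((1+\gamma)^d-1)$. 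You instead pass to tree coordinates $(x_1,y_2,\ldots,y_k)$ and bound the bad set via surface area of convex bodies in $[-R,R]^d$; both are legitimate, and your surface-area route avoids the scaling computation but requires invoking (correctly) that convex bodies in a box have surface area $O(R^{d-1})$.

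There is, however, a gap in the Lipschitz estimate for interior cubes that you should repair. You mesh by $\delta$-cubes in the tree coordinates $(x_1,y_2,\ldots,y_k)$, but the change of variables back to $\bx$ is not an isometry: for a non-tree edge $\{i,j\}$ we have $x_i-x_j=\sum_{e\in P_{ij}}\pm y_e$ where $P_{ij}$ is the tree path, which may have length up to $k-1$, so a $\delta$-cube in tree coordinates has image whose oscillation in $x_i-x_j$ is up to $(k-1)\delta$, not $O(\delta)$. Consequently the per-cube bound $|f^\sigma(\bx)-f^\sigma(\bx_c)|\leq O(Lk^2\delta)$ should read $O(Lk^3\delta)$, and similarly the boundary-tube width $\eta$ must be taken $O(k\sqrt{dk}\,\delta)$ rather than $O(\sqrt{dk}\,\delta)$. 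This spoils the stated $k^2$ dependence (you get $k^3$ or worse), although it is harmless for the eventual application in Proposition~\ref{pr:compute-cluster}, where $\delta=\eps e^{-Ck^2}$ absorbs any polynomial in $k$. The clean fix is to do exactly what the paper does: mesh in the original coordinates and use the spanning tree \emph{only} to enumerate the relevant grid points and to bound volumes, so that each edge difference $x_i-x_j$ moves by at most $2\delta$ within a mesh cell and the $k^2$ bound survives.
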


Proposition \ref{pr:compute-cluster} follows quickly from here.

	\begin{proof}[Proof of Proposition \ref{pr:compute-cluster}]
		Note that by~\eqref{eq:fsigint}
		\[
		C_k(\Lambda_n)=  \sum_{G \in \mathcal{G}_k}  \sum_{\sigma} \int_{\Lambda_n^k}f^{\sigma}(\bx) d \bx\,.
		\]
		Taking $\delta = \eps \exp(-Ck^2)$ in Lemma~\ref{lem:discretize} with $C$ sufficiently large as a function of $R,d,\ell$ {and $B$} we may approximate $C_k(\Lambda_n)/|\Lambda_n|$ up to additive error $\eps$ in time $O(k^2 {e^{Bk^2}}|\Lambda_n| (2R)^{d(k-1)}\delta^{-dk} )= O(|\Lambda_n| \eps^{-dk} e^{ck^3})$ where $c$ is a constant depending only on $R,d$ and $\ell$ {and $B$}.
	\end{proof}

	\subsection{Approximating the integral over a labelling}
	
	In this section, we prove Lemma~\ref{lem:discretize}.  Fix $G\in\mathcal{G}_k$ and set $S = \Lambda_n$.

	Recall from Assumption~\ref{assumption:potential} that the sets $\Delta_j$ are defined via $\Delta_j = K_j \setminus K_{j - 1}$.  For a given $\gamma \geq 0$, define 
	\[
	\Delta_j^{(\gamma)} := (1 - \gamma)K_j \setminus (1 + \gamma)K_{j - 1}\, .
	\]  
	
	Define the set $U_\gamma$ via $$U_\gamma := \{\bx  \in S^k : x_i - x_j \in \Delta_{\sigma(\{i,j\})}^{(\gamma)} , \text{ for all } \{i,j\} \in E(G)  \} \cap \supp(f^\sigma)\, .$$

Before continuing, let us motivate the definition of $U_\gamma$. We will show that if $\bx\in U_\gamma$, then $f^\sigma$ is Lipschitz on a small neighbourhood of $\bx$ (see Lemmas~\ref{lem:deltint} and~\ref{lem:sigmalip} below). This will allow us to approximate the integral $\int_{U_\gamma} f^\sigma(\bx)\,d\bx$ by a sum over a mesh. Indeed, we define 

\[\mathcal{S}_{\sigma,\delta} := U_\gamma \cap ((\delta\Z)^d)^k\,, \]
and prove the following lemma. 

\begin{lemma}\label{lem:lipschitz-approx}
		$$\left|\int_{U_\gamma} f^\sigma(\bx)\,d\bx - \sum_{\bx \in \cS_{\sigma,\delta}} \delta^{dk} f^\sigma(\bx)\right| = O(k^2 {e^{Bk^2}}\delta |S| (2R)^{d(k-1)} )\,. $$
	\end{lemma}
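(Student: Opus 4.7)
The plan is to compare $\int_{U_\gamma} f^\sigma(\by)\,d\by$ with the Riemann-type sum $\sum_{\bx \in \cS_{\sigma,\delta}} \delta^{dk} f^\sigma(\bx)$ by exploiting the Lipschitz property of $f^\sigma$ on $U_\gamma$ that the forthcoming Lemmas~\ref{lem:deltint} and~\ref{lem:sigmalip} will supply. Concretely, each factor $f_{i,j}^\sigma$ is bounded by $1$ in absolute value, depends only on $x_i - x_j$, and is $O(L)$-Lipschitz on $U_\gamma$ since the $\gamma$-contraction keeps $x_i - x_j$ uniformly away from $\partial \Delta_{\sigma(\{i,j\})}$. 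Expanding $\nabla f^\sigma = \sum_e (\nabla f_e^\sigma)\prod_{e'\neq e} f_{e'}^\sigma$ by Leibniz and using $|f_{e'}^\sigma|\le 1$ yields $\|\nabla f^\sigma\|_1 \le \sum_{e \in E(G)} \|\nabla f^\sigma_e\|_1 = O(|E(G)|) = O(k^2)$ uniformly on $U_\gamma$, so $f^\sigma$ is $O(k^2)$-Lipschitz in the $\ell^\infty$-norm on $\R^{dk}$.

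Next, I would tile $\R^{dk}$ by half-open cubes $Q_\bx = \bx + [0,\delta)^{dk}$ indexed by $\bx \in (\delta\Z)^{dk}$ and split the grid points into \emph{interior} ones (where $Q_\bx \subseteq U_\gamma$) and \emph{boundary} ones (where $Q_\bx$ meets both $U_\gamma$ and its complement). The error then decomposes as
\[
\int_{U_\gamma} f^\sigma(\by)\,d\by - \sum_{\bx \in \cS_{\sigma,\delta}} \delta^{dk} f^\sigma(\bx) = \sum_{\bx \in (\delta\Z)^{dk}} \left( \int_{Q_\bx \cap U_\gamma} f^\sigma(\by)\,d\by - \delta^{dk}\, \one\{\bx \in U_\gamma\}\, f^\sigma(\bx) \right).
\]
For an interior cube the summand equals $\int_{Q_\bx}(f^\sigma(\by)-f^\sigma(\bx))\,d\by$, which is bounded by $O(k^2\delta)\cdot\delta^{dk}$ using the Lipschitz estimate above and the $\ell^\infty$-diameter $\delta$ of $Q_\bx$. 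A spanning tree of $G$ gives $|\supp(f^\sigma)| \le |S|(2R)^{d(k-1)}$, since fixing $x_1 \in S$ constrains every other coordinate to a translate of $[-R,R]^d$ through a tree path. Hence the number of interior cubes is at most $|S|(2R)^{d(k-1)}/\delta^{dk}$, and they contribute a total error of $O(k^2\delta\cdot|S|(2R)^{d(k-1)})$, exactly matching the target bound.

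The main obstacle I anticipate is controlling the boundary cubes. Each such cube contributes at most $2\delta^{dk}$ in absolute value (using $|f^\sigma|\le 1$ and $|Q_\bx \cap U_\gamma| \le \delta^{dk}$), so the total boundary error is at most twice the volume of the $\delta$-neighborhood of $\partial U_\gamma$ inside $\supp(f^\sigma)$. Since $U_\gamma$ is an intersection of at most $2|E(G)| = O(k^2)$ slabs, each of the form $\{\bx : x_i - x_j \in (1\pm\gamma)K_m\}$ with $K_m$ compact convex, a standard Minkowski-content estimate combined with the same spanning-tree bookkeeping bounds the $\delta$-strip around any single slab (intersected with $\supp(f^\sigma)$) by $O(\delta\cdot|S|(2R)^{d(k-1)})$: after parametrizing by the tree-edge differences, requiring one such difference to lie $\delta$-close to $\partial K_m$ replaces a $(2R)^d$-volume constraint by a $\delta\cdot(2R)^{d-1}$-surface constraint, and for non-tree edges the same estimate holds since the relevant difference is an integer-coefficient affine function of a single free coordinate along the tree path. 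Summing over the $O(k^2)$ slabs yields a boundary contribution of $O(k^2\delta\cdot|S|(2R)^{d(k-1)})$, which combines with the interior bound to complete the proof.
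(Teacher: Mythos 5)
Your proof is correct and follows essentially the same two-ingredient strategy as the paper: bound the Riemann-sum error on the ``good'' part of $U_\gamma$ by the Lipschitz estimate of Lemma~\ref{lem:sigmalip}, and bound the remaining error by a slab-by-slab Minkowski-content estimate in the spirit of Lemma~\ref{lem:Wbd}. The packaging differs slightly. The paper first replaces $f^\sigma(\bx)$ by $f^\sigma(r(\bx))$ over all of $U_\gamma$ (which requires Lemma~\ref{lem:deltint} to guarantee $r(\bx)\in A_\sigma$), then swaps the domain $U_\gamma$ for the grid-cube union $U_\gamma'$ and controls $|U_\gamma'\triangle U_\gamma|$. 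Your version decomposes cube by cube into interior and boundary cubes; for interior cubes $Q_\bx\subseteq U_\gamma\subseteq A_\sigma$, so you only need the Lipschitz bound on $A_\sigma$ (and thus you don't strictly need Lemma~\ref{lem:deltint} here at all, which is a small simplification), and boundary cubes are absorbed into the Minkowski-content term exactly as the paper does for the symmetric difference. Your $\ell_1$-gradient bookkeeping also avoids the spurious $\sqrt{dk}$ factor that appears in the paper's displayed inequality before being absorbed into the $O(\cdot)$. One wrinkle worth flagging in your write-up (which the paper glosses over in precisely the same way): you describe $U_\gamma$ as ``an intersection of $O(k^2)$ slabs,'' but by definition $U_\gamma$ is also intersected with $\supp(f^\sigma)$ and with $S^k$, and $\partial\,\supp(f^\sigma)$ is not a priori controlled. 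The fix is to observe that since $f^\sigma$ vanishes off its support, both the integral and the Riemann sum are unchanged if you drop $\supp(f^\sigma)$ from the definition of $U_\gamma$ and work with $V_\gamma=\{\bx\in S^k: x_i-x_j\in\Delta^{(\gamma)}_{\sigma(\{i,j\})}\ \forall\{i,j\}\in E(G)\}$, whose boundary really is covered by the slab boundaries (plus $\partial S^k$, which is of lower order); your non-tree-edge remark is then handled cleanly by choosing, for each edge, a spanning tree containing that edge as in Lemma~\ref{lem:Wbd}.
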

To complete the proof of Lemma~\ref{lem:discretize} we show that $\int_{U_\gamma} f^\sigma(\bx)\,d\bx$  closely approximates $\int_{\Lambda_n^k} f^\sigma(\bx) \,d\bx$ and that the sum in Lemma~\ref{lem:lipschitz-approx} can be computed efficiently. We turn our attention first to the latter task.

	\begin{lemma}
		The set $\mathcal{S}_{\sigma,\delta}$ may be enumerated in time $O(k^2|S| (2R)^{d(k-1)}) \delta^{-dk})$ and satisfies $|\mathcal{S}_{\sigma,\delta}| = O(|S| (2R)^{d(k-1)}) \delta^{-dk})$.
	\end{lemma}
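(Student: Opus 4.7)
The plan is to use connectivity of $G$ to parametrize candidate points via a spanning tree. Any $\bx \in U_\gamma$ has $x_i - x_j \in K_\ell \subseteq [-R,R]^d$ for every edge $\{i,j\} \in E(G)$, so after fixing a spanning tree $T \subseteq G$ rooted at vertex $1$ and writing $y_e = x_i - x_{\mathrm{parent}(i)}$ for each tree edge $e = \{i,\mathrm{parent}(i)\}$, the tuple $\bx$ is completely determined by the pair $\big(x_1, (y_e)_{e \in E(T)}\big)$, where $x_1 \in S$ and each $y_e$ lies in the bounded set $K_{\sigma(e)} \subseteq [-R,R]^d$.

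First I would iterate $x_1$ over $S \cap (\delta\Z)^d$ (at most $O(|S|\delta^{-d})$ choices) and, for each of the $k-1$ tree edges, iterate $y_e$ over $[-R,R]^d \cap (\delta\Z)^d$ (at most $O((2R)^d \delta^{-d})$ choices each). Reconstructing the corresponding $\bx$ by a traversal of $T$ takes $O(k)$ arithmetic operations, and since $(\delta\Z)^d$ is closed under addition each reconstructed point automatically lies in $((\delta\Z)^d)^k$. Multiplying the counts gives $O\!\left(|S|\delta^{-d} \cdot ((2R)^d\delta^{-d})^{k-1}\right) = O\!\left(|S|(2R)^{d(k-1)}\delta^{-dk}\right)$ candidates in total, which already bounds $|\mathcal{S}_{\sigma,\delta}|$.

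Finally, for each candidate $\bx$ I would test membership in $U_\gamma$ by checking the $|E(G)| \leq \binom{k}{2}$ inclusions $x_i - x_j \in \Delta_{\sigma(\{i,j\})}^{(\gamma)} = (1-\gamma)K_{\sigma(\{i,j\})} \setminus (1+\gamma)K_{\sigma(\{i,j\})-1}$; by Assumption~\ref{assumption:potential} each such test costs $O(1)$. The per-candidate cost is thus $O(k^2)$, giving a total runtime of $O\!\left(k^2 |S| (2R)^{d(k-1)} \delta^{-dk}\right)$, as desired. The only nontrivial step is the spanning-tree parametrization: a naive enumeration of $(S \cap (\delta\Z)^d)^k$ would scale as $|S|^k\delta^{-dk}$, which is far too large, and exploiting connectivity of $G$ is precisely what replaces $|S|^{k-1}$ by $(2R)^{d(k-1)}$. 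Membership checking and the observation that differences of mesh points are mesh points are routine bookkeeping.
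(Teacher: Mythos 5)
Your proof is correct and matches the paper's approach: both fix a spanning tree of $G$, use the fact that differences along tree edges lie in $[-R,R]^d$ to bound and enumerate a superset of $\mathcal{S}_{\sigma,\delta}$ of size $O(|S|(2R)^{d(k-1)}\delta^{-dk})$, and then run an $O(k^2)$ per-candidate membership test. You spell out the tree-based enumeration a bit more explicitly than the paper; the only small omission is that verifying $\bx\in U_\gamma$ also requires checking $\bx\in S^k$ and $\bx\in\supp(f^\sigma)$, both of which are additional $O(k^2)$ per-point checks that do not change the stated runtime.
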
 
	\begin{proof}
		Fix a spanning tree $T$ of $G$ and let
		\[
		\cT = \{\bx \in S^k \cap ((\delta \Z)^d)^k : \|x_i - x_j\|_\infty \leq 2R \text{ for all }\{i,j\} \in E(T) \}\, .
		\]
		
		First observe that $\cS_{\sigma,\delta}\subseteq \cT$, since if $\bx\in U_\gamma$ then $x_i-x_j\in \Delta^{(\gamma)}_{\sigma(\{i,j\})}\subseteq [-R,R]^d$ for all $ \{i,j\} \in E(G)$. In particular,  $\|x_i - x_j\|_\infty \leq 2R$ for all $ \{i,j\} \in E(T)$.

		Note that $|\mathcal{T}| = O(|S|(2R)^{d(k-1)}\delta^{-dk} )$ and that $\mathcal{T}$ may be enumerated in time  $O(|S|(2R)^{d(k-1)}\delta^{-dk} )$ as well.  Further, for each point in $\cT$, we may check membership in $\cS_{\sigma,\delta}$ in time $O(k^2)$, completing the lemma.
	\end{proof}

To show that $\int_{U_\gamma} f^\sigma(\bx)\,d\bx$  closely approximates $\int_{S^k} f^\sigma(\bx) \,d\bx$ it will be enough to show that the measure of 

\[W_\gamma = S^k \cap \supp(f^\sigma) \setminus U_\gamma\,,\] 
is small.

\begin{lemma}\label{lem:Wbd}
		For $\gamma \leq 1/d$ we have $|W_\gamma| \leq 4|S| |E(G)| (2R)^{d(k-1)} d \gamma $. 
	\end{lemma}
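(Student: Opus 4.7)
The plan is to bound $|W_\gamma|$ by a union bound over edges. Write $W_\gamma = \bigcup_{e \in E(G)} W_\gamma^{(e)}$, where for each edge $e = \{i,j\} \in E(G)$,
\[
W_\gamma^{(e)} = \bigl\{\bx \in S^k \cap \supp(f^\sigma) : x_i - x_j \in \Delta_{\sigma(e)} \setminus \Delta_{\sigma(e)}^{(\gamma)}\bigr\}.
\]
It suffices to show $|W_\gamma^{(e)}| \le 4 |S|(2R)^{d(k-1)} d\gamma$ for every $e$, and then sum.

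First I would establish a volumetric bound on $\Delta_j \setminus \Delta_j^{(\gamma)}$. Since $\Delta_j^{(\gamma)} = (1-\gamma)K_j \setminus (1+\gamma)K_{j-1}$, we have
\[
\Delta_j \setminus \Delta_j^{(\gamma)} \subseteq \bigl(K_j \setminus (1-\gamma)K_j\bigr) \cup \bigl((1+\gamma)K_{j-1} \setminus K_{j-1}\bigr).
\]
By Bernoulli's inequality, $|K_j \setminus (1-\gamma)K_j| = (1-(1-\gamma)^d)|K_j| \le d\gamma |K_j|$, and for $\gamma \le 1/d$ the bound $(1+\gamma)^d \le e^{d\gamma}$ combined with $e^x - 1 \le (e-1)x$ on $[0,1]$ gives $|(1+\gamma)K_{j-1}\setminus K_{j-1}| \le (e-1)d\gamma |K_{j-1}|$. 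Using $K_j, K_{j-1} \subseteq [-R,R]^d$ so $|K_j|, |K_{j-1}| \le (2R)^d$, this yields $|\Delta_j \setminus \Delta_j^{(\gamma)}| \le e \cdot d\gamma (2R)^d \le 4 d\gamma (2R)^d$.

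Next, fix an edge $e = \{i,j\}$ and choose a spanning tree $T$ of $G$ that contains $e$ (which exists since $G$ is connected). Root $T$ at vertex $i$. For any $\bx \in W_\gamma^{(e)}$, every other vertex $u$ is at $\ell_\infty$-distance at most $R$ from its parent in $T$ (since $x_u - x_{\mathrm{parent}(u)} \in \Delta_{\sigma(\cdot)} \subseteq [-R,R]^d$ on the support). Parameterize the integration by the displacement vectors along tree edges: $x_i$ ranges over $S$ contributing a factor $|S|$, each tree edge $f \neq e$ contributes a displacement volume of at most $(2R)^d$, and the edge $e$ itself restricts the displacement $x_j - x_i$ to $\Delta_{\sigma(e)} \setminus \Delta_{\sigma(e)}^{(\gamma)}$, contributing at most $4 d\gamma (2R)^d$ by the previous paragraph. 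Multiplying gives $|W_\gamma^{(e)}| \le |S| \cdot (2R)^{d(k-2)} \cdot 4 d\gamma (2R)^d = 4 |S| (2R)^{d(k-1)} d\gamma$.

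Summing over the $|E(G)|$ edges and applying the union bound completes the proof. The only non-routine step is the volume comparison in the first paragraph; everything else is a straightforward change-of-variables argument along the spanning tree, where the factor $(2R)^{d(k-1)}$ simply accounts for the fact that once one vertex is placed in $S$, the remaining $k-1$ vertices are confined (via the tree edges in $\supp f^\sigma$) to a bounded product region.
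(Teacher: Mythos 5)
Your proof is correct and follows essentially the same route as the paper's: decompose $W_\gamma$ over edges, bound the volume of $\Delta_j \setminus \Delta_j^{(\gamma)}$ by $O(d\gamma)(2R)^d$ using convexity, and then for each edge pick a spanning tree containing it and integrate along tree displacements to get a factor $|S|(2R)^{d(k-2)}$ times the shell volume. The only differences are cosmetic (you use $e-1$ rather than $2$ as the constant in the bound $(1+\gamma)^d - 1 \le 2d\gamma$, and you spell out the change of variables slightly more explicitly), so this matches the paper's argument.
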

	\begin{proof}
		For each edge $\{a,b\} \in E(G)$ define $$W_{\gamma,a,b} = \{\bx \in S^k:  x_i - x_j \in \Delta_{\sigma(\{i,j\})} \text{ for all } \{i,j\} \in E(G), \text{ and }x_a - x_b \notin \Delta_{\sigma(\{a,b\})}^{(\gamma)}  \}\,. $$
		Note then that $W_\gamma \subseteq \cup_{\{a,b\}\in E(G) } W_{\gamma,a,b}$.  Thus, it is sufficient to prove \begin{equation} \label{eq:W-gab-need}
			|W_{\gamma,a,b}| \leq 4 |S| (2R)^{d(k-1)} d\gamma \,.
		\end{equation}
		
		To see \eqref{eq:W-gab-need}, fix a spanning tree $T$ of $G$ so that $\{a,b\} \in E(T)$.  We then have that 
		\[
		W_{\gamma,a,b} \subseteq \{\bx \in (\R^d)^k : x_1 \in S, x_i - x_j \in [-R,R]^d \text{ for all }\{i,j\} \in E(T), x_a - x_b \in \Delta_{\sigma(\{a,b\})} \setminus \Delta_{\sigma(\{a,b\})}^{(\gamma)}   \}\,.\]
		This provides a bound of $$|W_{\gamma,a,b}| \leq |S|  (2R)^{d(k-2)} | \Delta_{\sigma(\{a,b\})} \setminus \Delta_{\sigma(\{a,b\})}^{(\gamma)}|\,. $$
		
		Write $j = \sigma(\{a,b\})$ for notational simplicity. By convexity of the sets $K_j$ and $K_{j-1}$ we have $(1-\gamma)K_j\subset K_j$ and $K_{j-1}\subset (1+\gamma)K_{j-1}$. We may therefore bound
		 \begin{align}\label{eqDeltaShellVol}
			| \Delta_{j} \setminus \Delta_{j}^{(\gamma)}| &\leq |K_j| - |(1-\gamma)K_j| + |(1 + \gamma)K_{j-1}| - |K_{j-1}| \nonumber\\
			&= |K_j| \left(1 - (1 - \gamma)^d \right) + |K_{j-1}|\left((1 + \gamma)^d - 1\right) \nonumber \\
			&\leq (2R)^d \cdot 2d\gamma +  (2R)^d \cdot 2d\gamma \,. 
		\end{align}
		Combining the two previous displayed equations shows \eqref{eq:W-gab-need} and completes the lemma.
	\end{proof}
Assuming Lemma~\ref{lem:lipschitz-approx}, the proof of Lemma~\ref{lem:discretize} now follows quickly.
	\begin{proof}[Proof of Lemma \ref{lem:discretize}]
		Bound \begin{align}\label{eqFormerLatter}
			\left|\int_{S^k} f^\sigma(\bx) \,d\bx -  \sum_{\bx \in \mathcal{S}_{\sigma,\delta}} \delta^{dk} f^\sigma(\bx) \right| &\leq \left|\int_{S^k \setminus U_\gamma} f^\sigma(\bx) \,d\bx\right| +  \left|\int_{U_\gamma} f^\sigma(\bx)\,d\bx - \sum_{\bx \in \cS_{\sigma,\delta}} \delta^{dk} f^\sigma(\bx)\right|\,.
		\end{align}
	{Since $|e^{-\phi(x)} - 1| \leq e^{2B}$, we have that $|f^\sigma| \leq e^{Bk^2 }$.}
		{The first term on the RHS of~\eqref{eqFormerLatter}}
	may {thus} be bounded by ${e^{Bk^2 }}|W_\gamma|$ which we bound using Lemma~\ref{lem:Wbd}, and the latter {term} may be bounded by Lemma \ref{lem:lipschitz-approx}.
	\end{proof}
	
	It remains to prove Lemma~\ref{lem:lipschitz-approx}. As discussed, a key step will be to show that if $\bx\in U_\gamma$, then $f^\sigma$ is Lipschitz on a small neighbourhood of $\bx$. This is carried out in the following two lemmas. 
	
		\begin{lemma}\label{lem:sigmalip}
		The function $f^{\sigma}$ is $(2{e^{Bk^2}} L|E(G)|)$-Lipschitz on $A_\sigma$.
	\end{lemma}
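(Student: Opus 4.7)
My plan is to exploit the crucial feature of $A_\sigma$: the constraints $x_i - x_j \in \Delta_{\sigma(\{i,j\})}$ pin each pairwise difference inside exactly one of the regions on which $e^{-\phi}$ is assumed Lipschitz, which removes the indicator discontinuities that would otherwise obstruct any Lipschitz estimate.

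First I would observe that on $A_\sigma$ every indicator $\one\{x_i - x_j \in \Delta_{\sigma(\{i,j\})}\}$ is identically $1$, so
\[
f_{i,j}^\sigma(\bx) = e^{-\phi(x_i-x_j)} - 1 \qquad \text{for } \bx \in A_\sigma.
\]
Next, since $y \mapsto e^{-\phi(y)}$ is $L$-Lipschitz on $\Delta_{\sigma(\{i,j\})}$ by Assumption~\ref{assumption:potential}, and since the linear map $\bx \mapsto x_i - x_j$ has operator norm at most $2$ (with respect to the norm in which the hypothesis is stated, e.g.\ $\|x_i - y_i\| + \|x_j - y_j\| \leq 2\|\bx - \by\|_\infty$), the composition $\bx \mapsto e^{-\phi(x_i-x_j)}$ is $2L$-Lipschitz on $A_\sigma$. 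Subtracting the constant $1$ preserves this, so each $f_{i,j}^\sigma$ is $2L$-Lipschitz on $A_\sigma$.

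Finally I would invoke the standard product rule for bounded Lipschitz functions: since each $f_{i,j}^\sigma$ takes values in $[-1,0]$, the telescoping identity
\[
\prod_e f_e^\sigma(\bx) - \prod_e f_e^\sigma(\by) = \sum_e \Bigl(\prod_{e'<e} f_{e'}^\sigma(\by)\Bigr)\bigl(f_e^\sigma(\bx) - f_e^\sigma(\by)\bigr)\Bigl(\prod_{e'>e} f_{e'}^\sigma(\bx)\Bigr)
\]
together with $|f_e^\sigma| \leq 1$ and the per-factor Lipschitz bound yields
\[
|f^\sigma(\bx) - f^\sigma(\by)| \leq |E(G)| \cdot 2L \cdot \|\bx - \by\|
\]
for all $\bx, \by \in A_\sigma$, which is exactly the claim.

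The only point that could look delicate is the second step: it is tempting to worry that Lipschitz behaviour on each $\Delta_j$ need not glue to Lipschitz behaviour on the union $\bigcup_j \Delta_j$, since $e^{-\phi}$ can jump at boundaries. But this is precisely what the set $A_\sigma$ rules out: every pair $(i,j)$ is constrained to a single region $\Delta_{\sigma(\{i,j\})}$, so no jumps are ever encountered along a straight-line path within $A_\sigma$ (and in any case Lipschitz is a purely pointwise metric condition, so no connectedness is needed). Beyond this conceptual point, the proof is straightforward bookkeeping.
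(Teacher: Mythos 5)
Your proposal is correct and follows essentially the same route as the paper: both use the telescoping product inequality $|\prod z_i - \prod w_i|\le\sum|z_i-w_i|$ for factors bounded by $1$ in modulus, then bound each factor via the $L$-Lipschitz assumption on $e^{-\phi}$ over $\Delta_{\sigma(\{i,j\})}$ together with the observation that $\bx\mapsto x_i-x_j$ contracts by at most a factor of $2$. Your closing remark about why the per-region Lipschitz hypothesis suffices without any gluing or connectedness concerns is a fair conceptual point that the paper leaves implicit.
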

	\begin{proof}
		Let $\bx, \by \in A_\sigma$. Then by definition
		\begin{align}\label{eq:lipdef}
			|f^{\sigma}(\bx)-f^{\sigma}(\by)|= \left|\prod_{\{i,j\} \in E(G)} (e^{-\phi(x_i - x_j)} - 1) - \prod_{\{i,j\} \in E(G)} (e^{-\phi(y_i - y_j)} - 1)  \right|\, .
		\end{align}
		To bound~\eqref{eq:lipdef} we use the inequality
		\[
		\left |\prod_{i=1}^mz_i -  \prod_{i=1}^mw_i\right|\leq {T^{m-1}}\sum_{i=1}^m|z_i-w_i|\, ,
		\]
		which holds whenever $|z_i|\leq {T}, |w_i|\leq {T}$ for all $i$. 
		This yields
		\begin{align*}
			|f^{\sigma}(\bx)-f^{\sigma}(\by)| &\leq {e^{B k^2}} \sum_{\{i,j\} \in E(G)} \left| e^{-\phi(x_i - x_j)} - e^{-\phi(y_i - y_j)} \right| \\
			& \leq  {e^{B k^2}} |E(G)| L \max_{\{i,j\} \in E(G)} \|(x_i-y_i) -(x_j-y_j) \|_2 \\
			&\leq 2 {e^{Bk^2}} |E(G)|L  \|\bx-\by\|_2\, ,
		\end{align*}
		where we used that $x\mapsto \exp(-\phi(x))$ is $L$-Lipschitz on each $\Delta_j$ by Assumption~\ref{assumption:potential}.
	\end{proof}

	Set \begin{equation}\label{eq:gamma-choice}
		{\gamma := 2R  \delta\,.}
	\end{equation} 
	
	Since we are working in the context of Lemma \ref{lem:discretize}, we may also assume that $\gamma \leq 1/d$ by taking $\delta$ small enough as a function of $d$ and $R$.

	\begin{lemma}\label{lem:deltint}
		Let $\gamma$ be as in \eqref{eq:gamma-choice}. If $\bx\in U_\gamma$, then 
		\[
		B_\infty(\bx; \delta)\subseteq A_\sigma\, ,
		\]
		where $B_\infty(\bx; \delta)$ denotes the open $\ell_\infty$ ball of radius $\delta$ centred at $\bx$.
	\end{lemma}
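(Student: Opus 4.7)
The plan is to fix an arbitrary edge $\{i,j\} \in E(G)$ with label $s := \sigma(\{i,j\})$ and an arbitrary $\by \in B_\infty(\bx;\delta)$, and then verify separately the two containments $y_i - y_j \in K_s$ and $y_i - y_j \notin K_{s-1}$; together these say $y_i - y_j \in \Delta_s$, which is precisely membership in $A_\sigma$. Both steps will rest on a single observation about the perturbation $e := (y_i - y_j) - (x_i - x_j)$: one has $\|e\|_\infty < 2\delta$, so for the calibrated choice $\gamma = 2\sqrt{d}R\delta$ the rescaled vector $e/\gamma$ has $\ell_\infty$-norm strictly below $1/(\sqrt{d}R) \leq 1/R$. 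By Assumption~\ref{assumption:potential}, this forces both $e/\gamma$ and $-e/\gamma$ to lie in $[-1/R,1/R]^d \subseteq K_1 \subseteq K_m$ for every $m$.

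For the inclusion $y_i - y_j \in K_s$, use that $\bx \in U_\gamma$ implies $x_i - x_j = (1-\gamma) v$ for some $v \in K_s$, whence
\[
y_i - y_j \,=\, (1-\gamma)\, v \,+\, \gamma\, (e/\gamma)
\]
exhibits $y_i - y_j$ as a convex combination of two points of $K_s$, giving $y_i - y_j \in K_s$ by convexity. For the exclusion $y_i - y_j \notin K_{s-1}$, I would argue by contradiction: assuming $y_i - y_j \in K_{s-1}$, the identity $x_i - x_j = (y_i - y_j) - e$ rearranges to
\[
\tfrac{1}{1+\gamma}(x_i - x_j) \,=\, \tfrac{1}{1+\gamma}(y_i - y_j) \,+\, \tfrac{\gamma}{1+\gamma}(-e/\gamma),
\]
which is a convex combination of two points of $K_{s-1}$. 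This places $x_i - x_j$ in $(1+\gamma)K_{s-1}$, contradicting the definition of $U_\gamma$. The degenerate case $s=1$ requires only the first containment, since $K_0 = \emptyset$ makes the exclusion vacuous.

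I do not anticipate a genuine obstacle here: the lemma is essentially a bookkeeping exercise, and the only substantive ingredient is the calibration of $\gamma$ in terms of $\delta$ so that the $\ell_\infty$-norm of $e/\gamma$ sits safely inside the cube $[-1/R,1/R]^d$ contained in every $K_m$. The $\sqrt{d}$ factor in $\gamma$ is the only place the ambient dimension enters, and it appears purely through the norm conversion on the perturbation $e$.
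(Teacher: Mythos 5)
Your proof is correct and rests on the same essential ingredient as the paper's --- the convexity of the $K_j$ together with the inscribed cube $[-1/R,1/R]^d\subseteq K_1$ --- but your execution is cleaner. The paper proves the containment via the intermediate Claim~\ref{claim:scaledDelta}: it writes $\by=(1-\gamma)\bz$ with $\bz\in K_t$, forms the convex hull $\Gamma$ of $B_2(\mathbf 0,1/R)\cup\{\bz\}$, and invokes a similar-triangles argument (referencing Figure~\ref{fig}) to lower-bound $d_2(\by,K_t^c)$, then converts norms to pass from $\ell_2$ to $\ell_\infty$. You sidestep the geometry entirely by writing the perturbed difference directly as a convex combination, $y_i-y_j=(1-\gamma)v+\gamma\,(e/\gamma)$ with both $v$ and $e/\gamma$ in $K_s$, and you handle the exclusion from $K_{s-1}$ by a symmetric convex-combination rearrangement that produces a contradiction with $x_i-x_j\notin(1+\gamma)K_{s-1}$. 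This is more elementary, and you correctly observe that it gives a bit of slack: your argument only needs $\gamma\ge 2R\delta$, while the paper's $\ell_2\to\ell_\infty$ conversion is what forces the $\sqrt d$ in the calibration $\gamma=2\sqrt d R\delta$. Both proofs leave the degenerate smallest-label case slightly informal (the paper's Assumption~\ref{assumption:potential} contains an apparent typo in the nesting $\{0\}=K_1$ versus $[-1/R,1/R]^d\subseteq K_1$, and never defines $K_0$), and your dismissal of that case is consistent with the paper's treatment.
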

	\begin{proof}
		Let $\bx\in U_\gamma\subseteq \supp(f^{\sigma})$.
	Suppose that $\bv\in(\R^d)^k$ is such that $\|\bv\|_\infty<\delta$. It suffices to show that $\bx+\bv\in A_{\sigma}$. Let $\{i,j\}\in E(G)$ and let $t=\sigma(\{i,j\})$. Our task is to verify that
		\begin{align}\label{eq:xvDelta}
			x_i - x_j + v_i - v_j \in  \Delta_{t}\, .
		\end{align}

		Since $\bx\in U_\gamma$ we have, by definition,
		\[
		x_i - x_j\in \Delta^{(\gamma)}_{t}\, .
		\]
		Moreover $\|v_i - v_j\|_\infty\leq 2\delta$. 
		The statement~\eqref{eq:xvDelta} then follows from the following claim. 
		\begin{claim}\label{claim:scaledDelta}
			If $\by\in \Delta^{(\gamma)}_{t}$, then $B_\infty(\by; 2\delta)\subseteq  \Delta_{t}.$
		\end{claim}

{		\begin{proof}
Recall that 
			\[
			\Delta_t= K_t \setminus K_{t-1} \quad\text{ and }\quad \Delta_t^{(\gamma)} = (1 - \gamma)K_t \setminus (1 + \gamma)K_{t - 1}\, .
			\]
			We will show that $B_\infty(\by; 2\delta)\subseteq K_t$. The argument to show that $B_\infty(\by; 2\delta)\subseteq K_{t-1}^c$ is analogous. 
			We let $\|\cdot\|$ denote the norm associated to the centrally symmetric convex set $K_t$, that is for $\mathbf u \in \R^{d}$, 
			\[
			\| \mathbf u \| := \inf \{\theta> 0 : \mathbf u \in \theta K_t \}\, .
			\]
			We note that since $[-1/R,1/R]^d\subseteq K_t \subseteq [R,R]^d$ by assumption, we have
			\begin{align}\label{eqEquivNorms}
			\frac{1}{R} \|\mathbf u\|_\infty \leq  \|\mathbf u\| \leq R\|\mathbf u\|_\infty\, .
			\end{align}
		        With these observations in hand, we note that since $\by \in \Delta^{(\gamma)}_{t}\subseteq (1-\gamma)K_t$, we have $\|\by\|\leq (1-\gamma)$.
	Suppose now that $\|\bz\|_\infty< 2\delta$, then by the triangle inequality,~\eqref{eq:gamma-choice} and~\eqref{eqEquivNorms}
	\[
	\|\by +\bz\|\leq \|\by\| + \|\bz\| < 1-\gamma + 2R\delta=1\, .
	\]
	In other words $\by +\bz\in K_t$ as desired. 
		\end{proof}
}
	
		\noindent Applying Claim \ref{claim:scaledDelta} verifies \eqref{eq:xvDelta}.
	\end{proof}

	With the previous two lemmas in hand, we are now in a position to prove Lemma~\ref{lem:lipschitz-approx}.

	\begin{proof}[Proof of Lemma~\ref{lem:lipschitz-approx}]
		Given $\bx\in (\R^{d})^k$, let $r(\bx)$ denote the point in $((\delta\Z)^d)^k$ closest to $\bx$ in $\ell_\infty$ distance (breaking ties arbitrarily). 
		Note that if $\bx\in U_\gamma$ then $r(\bx)\in A_\sigma$ by Lemma~\ref{lem:deltint}.
		It follows from Lemma~\ref{lem:sigmalip} that 
		\begin{align}\label{eq:f-round}
			\left|\int_{U_\gamma} f^\sigma(\bx) -  f^\sigma(r(\bx))\,d\bx\right| \leq 2{e^{Bk^2}}LE(G)\sqrt{dk}\delta |U_\gamma|= O( {e^{Bk^2}}|E(G)|\delta  |S| (2R)^{d(k-1)} ) \, .
		\end{align}

		Define \begin{equation*}
			U_\gamma' := \bigcup_{\bx \in \cS_{\sigma,\delta}} B_\infty(\bx,\delta/2)
		\end{equation*}
		and note that \begin{equation}\label{eq:rounded-to-sum}
			\int_{U_\gamma'} f^\sigma(r(\bx))\,d\bx =  \sum_{\bx \in \cS_{\sigma,\delta}} \delta^{dk} f^\sigma(\bx)\,.
		\end{equation}

		 Using the bound $|f^\sigma| \leq {e^{Bk^2}}$ gives \begin{equation}\label{eq:Ugprime-bound}
			\left| \int_{U_\gamma} f^{\sigma}(r(\bx))\,d\bx -  \int_{U_\gamma'} f^{\sigma}(r(\bx))\,d\bx   \right| \leq {e^{Bk^2}} |U_\gamma' \triangle U_\gamma| \, .
		\end{equation}
{Combining lines \eqref{eq:f-round}, \eqref{eq:rounded-to-sum} and \eqref{eq:Ugprime-bound}, it suffices to show that $|U_\gamma' \triangle U_\gamma|=O\left(|E(G)| \delta |S| (2R)^{d(k-1)} \right)$  to complete the lemma.
To this end note that 
\[
U_\gamma' \triangle U_\gamma \subseteq X_\gamma:= \{\bx \in (\R^d)^k : d_\infty(\partial U_\gamma, \bx) \leq \delta/2  \}\, .
\] 
If $\by\in \partial U_\gamma$, then $y_a-y_b\in  \partial \Delta_{\sigma(\{a,b\})}^{(\gamma)} $ for some $\{a,b\}\in E(G)$. 
Thus, if $\bx \in X_\gamma$, then 
\begin{align*}
d_\infty( x_a-x_b,  \partial \Delta_{\sigma(\{a,b\})}^{(\gamma)} )\leq \delta
\end{align*}
 for some $\{a,b\}\in E(G)$. Arguing as in Lemma~\ref{lem:Wbd}, it suffices to show that for $t\in [\ell]$,
 \[
 |\{z: d_\infty( z,  \partial \Delta_{t}^{(\gamma)} )\leq \delta\}| = O((2R)^d\delta)\, .
 \]
 
 Suppose then that $d_\infty( z,  \partial \Delta_{t}^{(\gamma)} )\leq \delta$.
Since,
 \[
 \partial\Delta_{t}^{(\gamma)}  \subseteq \partial( (1-\gamma)K_t) \cup  \partial ((1+\gamma)K_{t-1})\, ,
 \]
 let us suppose first that 
 \begin{align}\label{eqDistToBdry}
 d_\infty(z,   \partial( (1-\gamma)K_t)  )\leq \delta\, .
 \end{align}
 As in the proof of Claim~\ref{claim:scaledDelta}, let $\|\cdot\|$ denote the norm associated to $K_t$. Then by~\eqref{eqDistToBdry}, there exists $p, q\in \R^d$ such that $z=p+q$, $\|p\|=(1-\gamma), \|q\|_\infty\leq \delta$. By the triangle inequality,~\eqref{eq:gamma-choice} and~\eqref{eqEquivNorms}
 \[
 \|z\|\leq \|p\| + \|q\|\leq 1-\gamma + R\delta< 1.
 \]
 Similarly $ \|z\|>  1-2\gamma$. In other words,
 \[
 z\in K_t \backslash (1-2\gamma)K_t\, .
 \]
 If instead  $d_\infty(z,   \partial( (1+\gamma)K_{t-1})  )\leq \delta$, then by a similar argument we have that  $z\in (1+2\gamma)K_{t-1} \backslash K_{t-1}$. By the same calculation as in~\eqref{eqDeltaShellVol}, we conclude that
 \[
  |\{z: d_\infty( z,  \partial \Delta_{t}^{(\gamma)} )\leq \delta\}| \leq | K_t \backslash (1-2\gamma)K_t| +  |(1+2\gamma)K_{t-1} \backslash K_{t-1}|= O((2R)^d\delta)\, , 
 \]
 as desired. }	
 \end{proof}

	\section{Reducing to cluster expansion coefficients} \label{sec:conformal}
	
	Here we show that one can approximate $\frac{1}{|\Lambda_n|}\log Z_{\Lambda_n}(\lambda)$ using approximations of the coefficients of the cluster expansion.  A slight nuisance is that $\lambda$ need not lie in the radius of convergence of the cluster expansion; in particular, it is known ~\cite{groeneveld1962two,penrose1963convergence} that for a repulsive potential the radius of convergence of the cluster expansion is at most $1/C_\phi$ (see remark 3.7 in~\cite{nguyen2020convergence}). 
	To get around this issue, we will use an idea of Barvinok and precompose our function $f = |\Lambda_n|^{-1}\log Z_{\Lambda_n}$ with a well-chosen polynomial map sending the unit disk into our zero-free region.  This will result in a function that is analytic in the disk, and the appearance of the polynomial will end up providing only a polynomial fuzz to the efficiency of our algorithms.  This approach is outlined in Barvinok's monograph \cite[pg.\ 22]{barvinok2016combinatorics} on partition functions; we isolate and prove an abstract version of this idea here.  
	
	Define $\cN_{\gamma} := \{z \in \C : d(z,[0,1]) < \gamma  \}$ for all $\gamma \in (0,1)$.  
	
	\begin{theorem}\label{th:reduction-cluster} Let  $\gamma \in (0,1)$.
		Suppose $f$ is analytic in $\cN_\gamma$ and $|f(z)| \leq 1$ for all $z \in \cN_{\gamma}$.  Then there is a constant $C = C_\gamma > 0$ depending only on $\gamma$ so that the following holds for all $\eps \in (0,1/2)$.  If one can approximate each of $f^{(j)}(0)/j!$ up to an additive error of $\eps^C$ for all $j = 0,1,\ldots,C \log (1/\eps)$ in time at most $T$, then one can approximate $f(1)$ up to an additive error of $\eps$ in time $ C T \log(1/\eps) + \log^C(1/\eps)\,.$ 
	\end{theorem}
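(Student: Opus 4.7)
The plan is to implement Barvinok's interpolation paradigm as described on pg.~22 of \cite{barvinok2016combinatorics}: since the unit disk need not be contained in $\cN_\gamma$ for small $\gamma$, direct Taylor expansion of $f$ at the origin need not converge at $z=1$, so I would precompose $f$ with a polynomial change of variables sending a disk slightly larger than the closed unit disk into $\cN_\gamma$. More precisely, the first step is to produce a polynomial $\psi(z) = \sum_{i=1}^{N} c_i z^i$ of some degree $N = N(\gamma)$ satisfying $\psi(0)=0$, $\psi(1)=1$, and $\psi(\{|z|\leq 1+\eta\}) \subseteq \cN_\gamma$ for some $\eta = \eta(\gamma)>0$. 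Such a $\psi$ exists because $\cN_\gamma$ is an open convex (hence simply connected) neighbourhood of $[0,1]$ containing $0$ and $1$ in its interior: one constructs it by first taking a holomorphic map from a disk of radius strictly greater than $1$ into a ``spindle''-shaped subregion of $\cN_\gamma$ sending $0\mapsto 0$ and $1\mapsto 1$ (obtained from a Riemann map with a Möbius and scaling normalisation), then truncating its power series to sufficiently high degree and rescaling to restore $\psi(1)=1$ exactly. This is a standard off-the-shelf input from the interpolation method and is the main technical ingredient of the argument.

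Setting $g := f\circ\psi$, the composition is holomorphic on $\{|z|<1+\eta\}$ with $|g|\leq 1$ there, so Cauchy's estimate yields $|g_k|\leq (1+\eta)^{-k}$ for the Taylor coefficients $g_k := g^{(k)}(0)/k!$. Choosing $M = \lceil C_1\log(1/\eps)\rceil$ with $C_1 = C_1(\gamma)$ large enough and using $\psi(1)=1$,
\[
\Bigl| f(1) - \sum_{k=0}^{M} g_k \Bigr| = \Bigl| g(1) - \sum_{k=0}^{M} g_k \Bigr| \leq \sum_{k>M}(1+\eta)^{-k} \leq \eps/2\, .
\]
Writing $a_j := f^{(j)}(0)/j!$ and expanding $g(z) = \sum_j a_j\psi(z)^j$, the fact that $\psi(z)^j$ has no terms of degree below $j$ (since $\psi(0)=0$) gives the finite sum $g_k = \sum_{j=0}^{k} a_j\,[z^k]\psi(z)^j$, so $g_0,\dots,g_M$ depend only on $a_0,\dots,a_M$. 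The coefficients $[z^k]\psi(z)^j$ can be precomputed exactly in time $\log^{O(1)}(1/\eps)$ by iterated polynomial multiplication on polynomials of degree at most $NM$.

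Finally, I would query the oracle for each $\tilde a_j$ with $|\tilde a_j - a_j|\leq \eps^C$ and form $\tilde g_k := \sum_{j=0}^k \tilde a_j [z^k]\psi(z)^j$. Setting $M_\psi := \max_{|z|=1}|\psi(z)|$, Cauchy's estimate on $|z|=1$ yields $|[z^k]\psi(z)^j|\leq M_\psi^j$, so $|\tilde g_k - g_k|\leq (k+1)\eps^C M_\psi^k$; since $k\leq M = O(\log(1/\eps))$ we have $M_\psi^k\leq \eps^{-C_2}$ for some $C_2 = C_2(\gamma)$, so summing over $k\leq M$ and taking $C$ large enough (depending on $\gamma$) guarantees $\sum_{k=0}^M|\tilde g_k - g_k|\leq \eps/2$. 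Combined with the truncation bound this yields $|f(1) - \sum_{k=0}^M \tilde g_k|\leq \eps$. The runtime consists of $M+1 = O(\log(1/\eps))$ oracle calls at cost $T$ each, plus $\log^{O(1)}(1/\eps)$ polynomial arithmetic, matching the stated bound $CT\log(1/\eps) + \log^C(1/\eps)$. The sole technical obstacle is the construction of $\psi$ in the first paragraph; once this is granted, the remainder is routine Cauchy estimates and error propagation through the polynomial composition formula.
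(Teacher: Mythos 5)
Your proposal follows essentially the same route as the paper's: both find a polynomial change of variables sending a disk into $\cN_\gamma$ (your normalisation $\psi(1)=1$ with a slightly larger source disk $\{|z|\le 1+\eta\}$ is equivalent to the paper's normalisation $\Phi(\overline D)\subset\cN_\gamma$ with $\Phi(z_1)=1$, $z_1\in D$, by the substitution $\Phi(z)=\psi((1+\eta)z)$), both truncate the Taylor series of $g=f\circ\psi$ via the Cauchy estimate, and both propagate the oracle error through the composition. The one genuine difference is in the bookkeeping for the Taylor coefficients of the composition: the paper invokes Fa\`a di Bruno's formula together with an explicit bound on the Bell polynomials evaluated at the derivatives of $\Phi$, whereas you expand $g(z)=\sum_j a_j\psi(z)^j$ and read off $g_k=\sum_{j\le k}a_j[z^k]\psi(z)^j$, bounding $|[z^k]\psi(z)^j|\le M_\psi^j$ by a Cauchy estimate on $|z|=1$ and computing the coefficients by iterated polynomial multiplication. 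Your version is somewhat more elementary and more directly algorithmic (no Bell-polynomial machinery), achieves the same $\log^{O(1)}(1/\eps)$ arithmetic cost, and gives the same final runtime; both treat the existence of the polynomial $\psi$ (resp.\ $\Phi$) as a black-box input from Barvinok's monograph, so neither has an advantage there.
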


	We let $D=\{z\in \C: |z|<1\}$ denote the open unit disk in $\C$. Our first step is finding a polynomial to map $\overline D$ into the region $\cN_\gamma$ so that some point in $D$ is mapped to $1$.  While there are many ways to find such a polynomial (for example, \cite[Lemma 2.2.3]{barvinok2016combinatorics}  gives an explicit construction) the form of the polynomial is not important for us here, and so we state the relevant properties:
	
	\begin{lemma}\label{lem:get-polynomial}
		For each $\gamma > 0$ we may find a polynomial $\Phi = \Phi_\gamma$ so that $\Phi(\overline{D}) \subset \cN_\gamma$, $\Phi(0) = 0$ and there is a point $z_1\in D$ so that $\Phi(z_1) = 1$.
	\end{lemma}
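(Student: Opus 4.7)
The plan is to obtain $\Phi$ by polynomially approximating a conformal map from the unit disk onto a slightly shrunken neighbourhood $\cN_{\gamma/2}$, and then recovering the condition $\Phi(z_1)=1$ exactly via an application of Rouch\'e's theorem.

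First, I would invoke the Riemann mapping theorem: since $\cN_{\gamma/2}$ is a simply connected proper open subset of $\C$ containing $0$ and $1$, there is a biholomorphism $\psi:D\to\cN_{\gamma/2}$ with $\psi(0)=0$. Because $\psi^{-1}([0,1])$ is a compact subset of $D$, I can choose $r_0<1$ with $\psi^{-1}([0,1])\subset \overline{D_{r_0}}$, where $D_{r_0}:=\{z\in\C:|z|<r_0\}$, and then set $r:=(1+r_0)/2\in(r_0,1)$. In particular $\psi$ is holomorphic on the closed disk $\overline{D_r}$, and there is some $w_0\in \overline{D_{r_0}}$ with $\psi(w_0)=1$; set $z_0:=w_0/r\in D$.

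Since the Taylor series of $\psi$ at $0$ converges uniformly on $\overline{D_r}$, I can take $N$ large so that the $N$-th Taylor polynomial $P_N$ satisfies $\sup_{|w|\leq r}|\psi(w)-P_N(w)|<\eps_0$ for a prescribed $\eps_0>0$ (to be fixed below). Define $\Phi(z):=P_N(rz)$. Then $\Phi(0)=P_N(0)=\psi(0)=0$ automatically, and for $z\in\overline{D}$ we have $|rz|\leq r$, so $\Phi(z)$ lies within distance $\eps_0$ of $\psi(rz)\in\cN_{\gamma/2}$. Taking $\eps_0<\gamma/2$ thus guarantees $\Phi(\overline{D})\subset\cN_\gamma$.

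The main remaining obstacle is producing an \emph{exact} solution $z_1\in D$ to $\Phi(z_1)=1$ rather than only an approximate one, and I would handle this by Rouch\'e's theorem. Injectivity of $\psi$ forces $\psi'(rz_0)\neq 0$, so the holomorphic function $k(z):=\psi(rz)-1$ has a simple (hence isolated) zero at $z_0$. Picking $\rho>0$ small enough that $\overline{B(z_0,\rho)}\subset D$ and $z_0$ is the only zero of $k$ in $\overline{B(z_0,\rho)}$, I set $c:=\min_{|z-z_0|=\rho}|k(z)|>0$ and shrink $\eps_0$ further so that $\eps_0<c$. Then on $|z-z_0|=\rho$ we have $|(\Phi(z)-1)-k(z)|=|P_N(rz)-\psi(rz)|<\eps_0<|k(z)|$, and Rouch\'e's theorem shows that $\Phi-1$ has a zero $z_1\in B(z_0,\rho)\subset D$, completing the construction.
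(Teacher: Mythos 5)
Your proof is correct, but it takes a genuinely different route from the paper. The paper does not prove Lemma~\ref{lem:get-polynomial} at all: it remarks that ``there are many ways to find such a polynomial'' and cites Barvinok's monograph (Lemma~2.2.3 there) for an \emph{explicit} construction. You instead argue non-constructively: the Riemann mapping theorem supplies a conformal map $\psi:D\to\cN_{\gamma/2}$ fixing $0$, a truncated Taylor series gives a polynomial uniformly close to $\psi$ on $\overline{D_r}$ for $r<1$, and Rouch\'e's theorem converts the approximate preimage of $1$ into an exact one. I checked the details and they hold up: $\cN_{\gamma/2}$ is convex hence simply connected, $\psi^{-1}([0,1])$ is compact in $D$, injectivity of $\psi$ gives $\psi'\neq 0$ so $z_0$ is a simple isolated zero of $k$, the order of quantifiers is handled correctly (fix $\rho$ and $c$ first, then shrink $\eps_0<\min(\gamma/2,c)$ and choose $N$), and $\overline{B(z_0,\rho)}\subset D$ ensures the Taylor error bound applies on the Rouch\'e circle. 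The trade-off between the two approaches is worth noting. Your argument is softer and more general --- it would work verbatim for any simply connected open neighbourhood of a continuum through $0$ and $1$, not just the stadium $\cN_\gamma$ --- whereas Barvinok's construction produces concrete coefficients. In the algorithmic context of this paper the latter matters, because Lemma~\ref{lem:bell-bound} and the proof of Theorem~\ref{th:reduction-cluster} need to actually evaluate $\Phi^{(j)}(0)$; your $\Phi$ exists but is not given by a formula, so one would have to treat it as a fixed precomputed object of constant size (which is legitimate since $\gamma$ is fixed, but worth saying explicitly). For the pure existence statement of Lemma~\ref{lem:get-polynomial} as written, your proof is complete and correct.
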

	
	The point now will be to work with the function $g: \overline{D} \to \C$ defined by $g = f \circ \Phi$.  Under the assumptions of Theorem \ref{th:reduction-cluster}, such a function $g$ is analytic in $D$ and $|g(z)| \leq 1$ for all $z \in \overline{D}$.  The Cauchy integral formula will imply that we can approximate $g$ by its Taylor polynomial provided we are not near the boundary:
	\begin{fact}\label{fact:cauchy}
		Let $g$ be analytic in $D$ with $|g(z)| \leq 1$ for all $z \in \overline{D}$.  Then for all $z \in D$ and $k \in \N$ we have 
		$$\left|g(z) - \sum_{j = 0}^{k-1} \frac{g^{(j)}(0) }{j!} z^j \right| \leq \frac{|z|^{k}}{1 - |z|}\,.$$
	\end{fact}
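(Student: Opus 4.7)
The plan is to prove this by first bounding the Taylor coefficients of $g$ using Cauchy's estimates, and then bounding the tail of the Taylor series by a geometric series. Since this is a standard fact from complex analysis, the argument is short and there is no real technical obstacle; the main small point is being careful about working on a closed disk of radius slightly less than $1$.

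First I would invoke Cauchy's integral formula on the circle $|w| = r$ for each $r < 1$. Since $g$ is analytic on the open disk $D$, and hence on a neighbourhood of the closed disk $\{|w| \leq r\}$ for any $r < 1$, we have
\begin{equation*}
\frac{g^{(j)}(0)}{j!} = \frac{1}{2\pi i}\oint_{|w| = r} \frac{g(w)}{w^{j+1}}\,dw\,,
\end{equation*}
and so $|g^{(j)}(0)/j!| \leq r^{-j}\sup_{|w|=r}|g(w)| \leq r^{-j}$. Letting $r \to 1^-$ and using the hypothesis $|g(z)| \leq 1$ on $\overline D$ gives $|g^{(j)}(0)/j!| \leq 1$ for every $j \geq 0$.

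Next I would use that $g$ is analytic in $D$ to write, for every $z \in D$, the convergent Taylor expansion
\begin{equation*}
g(z) = \sum_{j \geq 0} \frac{g^{(j)}(0)}{j!} z^j\,.
\end{equation*}
Subtracting the first $k$ terms and applying the triangle inequality together with the coefficient bound $|g^{(j)}(0)/j!| \leq 1$ yields
\begin{equation*}
\left|g(z) - \sum_{j=0}^{k-1} \frac{g^{(j)}(0)}{j!} z^j\right| \leq \sum_{j \geq k} |z|^j = \frac{|z|^k}{1 - |z|}\,,
\end{equation*}
which is the desired estimate. There are no real obstacles here; the fact is essentially equivalent to the statement that Taylor series of bounded analytic functions on the disk have uniformly bounded coefficients.
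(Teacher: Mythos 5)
Your proof is correct and follows the same route as the paper: bound the Taylor coefficients by $1$ via Cauchy's integral formula, then estimate the tail by a geometric series. You merely spell out the limiting argument $r\to 1^-$ which the paper leaves implicit.
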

	\begin{proof}
		Bound $$\left|g(z) - \sum_{j = 0}^{k-1} \frac{g^{(j)}(0) }{j!} z^j \right| \leq \sum_{j \geq k} \left|\frac{g^{(j)}(0) }{j!} \right|\cdot |z|^j$$
		and note that by Cauchy's integral formula we have $$\left|\frac{g^{(j)}(0) }{j!}\right|\leq 1\,.$$  Summing over $j \geq k$ completes the proof.
	\end{proof}

	Let $z_1$ be as in Lemma~\ref{lem:get-polynomial}. It follows that in order to approximate $f(1) = g(z_1)$ up to an additive error of $\eps$, it is sufficient to approximate the first $k = C_\gamma \log(1/\eps)$ terms $g^{(j)}(0)/j!$ up to an error of $\eps/(2k)$ each.  To do so, we will relate the derivatives of $g$ to those of $f$.  Iterating the chain rule gives the classical Fa\`a di Bruno formula:

	\begin{fact}[Fa\`a di Bruno's formula]\label{fact:fdB}
		Let $\Phi$ and $f$ be analytic at $0$ with $\Phi(0) = 0$ and set $g = f\circ \Phi$.  Then for each $n \in \N$ we have \begin{equation}
			g^{(n)}(0) = \sum_{k = 1}^n f^{(k)}(0) B_{n,k}(\Phi'(0),\Phi''(0),\ldots,\Phi^{(n-k+1)}(0))
		\end{equation}
		where $B_{n,k}(x_1,x_2,\ldots,x_{n-k+1})$ are the Bell polynomials given by \begin{equation}\label{eq:bell-def}
			B_{n,k}(x_1,x_2,\ldots,x_{n-k+1}) = \sum \frac{n!}{j_1! j_2! \cdots j_{n-k+1}! }\left(\frac{x_1}{1!} \right)^{j_1}\left(\frac{x_2}{2!} \right)^{j_2}\cdots \left(\frac{x_{n-k+1}}{(n-k+1)!} \right)^{j_{n-k+1}}
		\end{equation}
		where the sum is over all sequences of non-negative integers $j_1,j_2,\ldots,j_{n - k +1}$ satisfying $\sum j_i = k$ and $\sum i j_i = n$.
	\end{fact}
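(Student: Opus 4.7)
The plan is to prove Fa\`a di Bruno's formula by extracting coefficients from the composed formal power series. Write $\Phi(z) = \sum_{m \geq 1} a_m z^m$ with $a_m = \Phi^{(m)}(0)/m!$ (the absence of a constant term is exactly the hypothesis $\Phi(0) = 0$), and $f(w) = \sum_{k \geq 0} b_k w^k$ with $b_k = f^{(k)}(0)/k!$. Since $g = f \circ \Phi$ is analytic at $0$, we have $g^{(n)}(0)/n! = [z^n]\,f(\Phi(z))$, so the task reduces to computing the coefficient of $z^n$ in the composition and then multiplying by $n!$.

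The first step is to expand $\Phi(z)^k$ via the multinomial theorem. Since $\Phi$ has no constant term,
\[
\Phi(z)^k = \Bigl(\sum_{m \geq 1} a_m z^m\Bigr)^k = \sum_{n \geq k} z^n \sum_{\substack{j_1,j_2,\ldots \geq 0 \\ \sum j_i = k,\ \sum i j_i = n}} \frac{k!}{j_1! j_2! \cdots} \prod_{i \geq 1} a_i^{j_i}.
\]
A short check shows the two constraints force $j_i = 0$ whenever $i > n - k + 1$, so the product is effectively finite.

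Substituting into $f(\Phi(z)) = \sum_{k \geq 0} b_k \Phi(z)^k$ and extracting $[z^n]$, only $1 \leq k \leq n$ contribute (the constraints rule out $k = 0$ for $n \geq 1$ and force $k \leq n$ since $\sum i j_i \geq \sum j_i$), yielding
\[
[z^n]\,f(\Phi(z)) = \sum_{k=1}^n \frac{f^{(k)}(0)}{k!} \sum_{\substack{j_1,\ldots,j_{n-k+1}\geq 0 \\ \sum j_i = k,\ \sum i j_i = n}} \frac{k!}{j_1! \cdots j_{n-k+1}!} \prod_{i=1}^{n-k+1} \Bigl(\frac{\Phi^{(i)}(0)}{i!}\Bigr)^{j_i}.
\]
Multiplying both sides by $n!$ and recognising the inner sum as $B_{n,k}(\Phi'(0),\ldots,\Phi^{(n-k+1)}(0))$ from definition \eqref{eq:bell-def} gives the claimed identity.

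The only real obstacle is the bookkeeping of factorials: one has to verify that the $k!$ from the multinomial expansion cancels against the $k!$ in $b_k = f^{(k)}(0)/k!$, leaving the $n!$ in front to produce the correct coefficient $n!/(j_1! \cdots j_{n-k+1}!)$ appearing in $B_{n,k}$. Everything else follows mechanically from the multinomial theorem applied to a power series without constant term, together with the fact that $f$ and $\Phi$ are analytic near $0$ so these formal manipulations are justified as genuinely convergent power series.
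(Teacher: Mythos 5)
The paper states Fa\`a di Bruno's formula as a classical \emph{Fact} without supplying a proof, noting only that it follows from ``iterating the chain rule.'' Your proposal gives a correct, complete proof via a genuinely different (and, for this purpose, cleaner) route: you expand $\Phi$ and $f$ as Taylor series at $0$, observe that $\Phi(0)=0$ makes the formal composition well-defined (only $1\le k\le n$ contribute to $[z^n]$), apply the multinomial expansion of $\Phi(z)^k$, and read off $g^{(n)}(0)=n!\,[z^n]\,f(\Phi(z))$. The bookkeeping checks out: the $k!$ from the multinomial coefficient cancels against $b_k=f^{(k)}(0)/k!$, the factor $n!$ reproduces the $n!/(j_1!\cdots j_{n-k+1}!)$ in the definition \eqref{eq:bell-def} of $B_{n,k}$, and the observation that $\sum j_i=k$, $\sum i j_i=n$ force $j_i=0$ for $i>n-k+1$ correctly truncates the product to the stated arguments $\Phi'(0),\ldots,\Phi^{(n-k+1)}(0)$. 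Compared with the chain-rule induction the paper alludes to, the coefficient-extraction argument buys you the set-partition combinatorics automatically from the multinomial theorem, with no separate inductive step; the chain-rule route is more hands-on but requires explicitly tracking how partitions refine under repeated differentiation. Either is standard; your version is complete and correct.
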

	
	To make use of this fact, we will need to evaluate the Bell polynomials $B_{n,k}$ at derivatives of our polynomial $\Phi$.  A simple term-by-term bound will be good enough for our purposes.
	
	\begin{lemma}\label{lem:bell-bound}
		For each polynomial $\Phi$ there is a constant $C = C_\Phi > 0$ so that the following holds.  For all $n, k \in \N$ we may compute $B_{n,k}(\Phi'(0),\Phi''(0),\ldots,\Phi^{(n-k+1)}(0))$ in time $O(k^{C})$ and we have the bound $$\left|\frac{k!}{n!}B_{n,k}(\Phi'(0),\Phi''(0),\ldots,\Phi^{(n-k+1)}(0)) \right| \leq e^{C k}\,.$$
	\end{lemma}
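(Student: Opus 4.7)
Both parts of the lemma flow from a single observation: $\Phi$ is a fixed polynomial of some bounded degree $D = \deg \Phi$, so $\Phi^{(i)}(0) = 0$ for every $i > D$. This immediately collapses the sum in~\eqref{eq:bell-def} to those tuples $(j_1,\ldots,j_{n-k+1})$ for which $j_i = 0$ whenever $i > D$. In particular the effective number of variables is the constant $D$, not the growing quantity $n - k + 1$.

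For the magnitude bound, I would first extract a constant $M = M_\Phi$ so that $|\Phi^{(i)}(0)/i!| \leq M$ for all $i$; this is immediate since $\Phi$ has only finitely many nonzero coefficients. Rewriting
\[
\frac{k!}{n!} B_{n,k}(\Phi'(0), \ldots, \Phi^{(n-k+1)}(0)) = \sum \frac{k!}{j_1! \cdots j_{n-k+1}!} \prod_{i} \left(\frac{\Phi^{(i)}(0)}{i!}\right)^{j_i},
\]
and using $\sum_i j_i = k$, the product of derivatives is bounded in absolute value by $M^k$. Dropping the constraint $\sum i j_i = n$ then overestimates the sum of multinomial coefficients by the full multinomial expansion
\[
\sum_{j_1 + \cdots + j_D = k} \frac{k!}{j_1! \cdots j_D!} = D^k,
\]
which gives $\left|\frac{k!}{n!} B_{n,k}\right| \leq (MD)^k = e^{k \log(MD)}$, of the desired form.

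For the runtime, note that the tuples $(j_1, \ldots, j_D)$ of nonnegative integers with $\sum j_i = k$ have cardinality $\binom{k + D - 1}{D - 1} = O(k^{D-1})$, and the additional constraint $\sum i j_i = n$ only restricts this further. One enumerates these tuples and, for each, evaluates the multinomial coefficient together with the product of at most $k$ factors $(\Phi^{(i)}(0)/i!)^{j_i}$; in the unit-cost real arithmetic model assumed in the paper each of these costs $O(k)$ operations. Summing gives a total runtime of $O(k^{D})$, which is $O(k^C)$ for some $C = C_\Phi$. The only thing to do is choose $C$ large enough to absorb both the exponent in the runtime and the constant $\log(MD)$ appearing in the magnitude bound; beyond this bookkeeping the argument is routine, and there is no substantive obstacle.
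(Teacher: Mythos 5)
Your argument is correct and follows essentially the same route as the paper: both use the observation that $\Phi^{(i)}(0)=0$ for $i>\deg\Phi$ to collapse the Bell polynomial to a sum over $O(k^{\deg\Phi})$ tuples, and both bound the magnitude by relaxing the constraint $\sum ij_i=n$ and appealing to the multinomial theorem. The only cosmetic difference is that you bound the product of derivative factors by $M^k$ and the multinomial coefficients by $D^k$ separately, while the paper applies the multinomial theorem once to get $(|x_1|+\cdots+|x_d|)^k$ directly; this is inconsequential.
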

	\begin{proof}
		First compute all derivatives of $\Phi$ at $0$, which takes $O_\Phi(1)$ time since $\Phi$ is a polynomial.  Set $x_j = \Phi^{(j)}(0)$ and note that $x_j = 0$ for $j > d$ where $d$ is the degree of $\Phi$.  We may thus reduce the sum in \eqref{eq:bell-def} to those for which $j_i = 0$ for all $i > d$ as all other summands are $0$.  Since the values $j_1,\ldots,j_d$ are non-negative integers of size at most $k$, we have $(k+1)^d$ choices for the sequence $(j_1,\ldots,j_d)$. Moreover the constraints $\sum j_i = k$ and $\sum i j_i = n$ imply that $n\leq dk$ and so each summand in~\eqref{eq:bell-def} can be computed in time $O_d(k^{C'})$ for some $C'=C'_d>0$. We may therefore calculate the sum~\eqref{eq:bell-def} in time $O_\Phi(k^C)$ for some $C=C_d>0$. 
		
  To see the claimed bound, apply the triangle inequality to see $$\left|\frac{k!}{n!}B_{n,k}(x_1,x_2,\ldots,x_{n-k+1})\right| \leq \sum \binom{k}{j_1,\ldots,j_d} |x_1|^{j_1} |x_2|^{j_2} \cdots |x_d|^{j_d}\,.$$ 
		Relaxing the sum to consist of sequences satisfying $\sum j_i = k$, the binomial theorem gives $$\sum \binom{k}{j_1,\ldots,j_d} |x_1|^{j_1} |x_2|^{j_2} \cdots |x_d|^{j_d} = (|x_1| + |x_2| + \ldots + |x_d|)^k \leq e^{Ck}$$ 
		for $C$ large enough with respect to $\Phi$.
	\end{proof}
	
	The proof of Theorem \ref{th:reduction-cluster} now follows from some bookkeeping.
	
	\begin{proof}[Proof of Theorem \ref{th:reduction-cluster}]
		Apply Lemma \ref{lem:get-polynomial} to find a polynomial $\Phi=\Phi_\gamma$ satisfying the hypotheses of the lemma and define $g = f \circ \Phi$.  Then by Fact \ref{fact:cauchy} we may take $k := C_\gamma \log(1/\eps)$ large enough so that $$\left|g(z_1) - \sum_{j = 0}^{k-1}\frac{g^{(j)}(0)}{j!} z_1^j  \right| \leq \eps/2\,.$$  Since $g(z_1) = f(1)$, it is thus sufficient to approximate $g^{(j)}(0)/j!$ for $j = 0,1,\ldots,k-1$ up to an additive error of at most $\delta := \eps/(2k)$.  By Fact \ref{fact:fdB} and Lemma \ref{lem:bell-bound} we may expand $$\frac{g^{(j)}}{j!}(0) = \sum_{i = 1}^j \frac{f^{(i)}(0)}{i!} B_{i,j}$$ where the coefficients $B_{i,j}$ satisfy $|B_{i,j}| \leq e^{C i} \leq e^{C k} = O(\eps^{-C})$ and may be computed in time $O(k^C) = O(\log^C(1/\eps))$.  Thus, if we take $C'$ large enough (as a function of $\gamma$) and approximate $f^{(i)}(0)/i!$ up to additive error $\eps^{C'}$, we obtain an approximation to $g(z_1) = f(1)$ up to additive error $\eps$.
	\end{proof}
	
	\section{Proof of Theorem \ref{th:main}}
	
	\begin{proof}[Proof of Theorem \ref{th:main}]
		Set $f(z):= \frac{1}{C|\Lambda_n|} \log Z_{\Lambda_n}(\lambda z)$, where $C$ is as in Assumption \ref{assumption:zero-free}.  Then $f$ satisfies the hypotheses of Theorem \ref{th:reduction-cluster}, and note that by \eqref{eq:C_k-def} we have $f^{(j)}(0)= \lambda^j C_j(\Lambda_n)/(C |\Lambda_n|)$.  If we set $\delta = \eps/(C |\Lambda_n|)$, then we are seeking to approximate $f(1)$ up to an additive error of $\delta$. By Propostion~\ref{pr:compute-cluster}, for each $j = 0,1,\ldots, O(\log(1/\delta) )$, we may compute $f^{(j)}(0)$ up to additive error $\delta^{O(1)}$ in time at most $O(|\Lambda_n| \delta^{-O(j)}e ^{O(j^3)}) = \exp(O(\log^3(1/\delta)))\,.$  Applying Theorem \ref{th:reduction-cluster} completes the proof.
	\end{proof}
	
	\section*{Acknowledgments}
	MJ is supported by a UKRI Future Leaders Fellowship MR/W007320/1.  MM supported in part by NSF grant DMS-2137623. {MR gratefully acknowledges support from the Bogazici Solidarity fund and the Institute of Mathematical Sciences, Chennai, where part of this work was carried out.} We would like to thank Will Perkins for helpful discussions. {We also thank the anonymous referee for their careful reading and helpful comments.}
	
	\bibliographystyle{alpha}
	\bibliography{sphere}

\newcommand{\etalchar}[1]{$^{#1}$}
\begin{thebibliography}{MRR{\etalchar{+}}53}

\bibitem[AW57]{alder1957phase}
Berni~Julian Alder and Thomas~Everett Wainwright.
\newblock Phase transition for a hard sphere system.
\newblock {\em The Journal of Chemical Physics}, 27(5):1208--1209, 1957.

\bibitem[Bar16]{barvinok2016combinatorics}
Alexander Barvinok.
\newblock {\em Combinatorics and complexity of partition functions}, volume~9.
\newblock Springer, 2016.

\bibitem[BK11]{Krauth}
Etienne~P Bernard and Werner Krauth.
\newblock Two-step melting in two dimensions: First-order liquid-hexatic
  transition.
\newblock {\em Physical Review Letters}, 107(15):155704, 2011.

\bibitem[DS85]{dobrushin1985completely}
Roland~L Dobrushin and Senya~B Shlosman.
\newblock Completely analytical {G}ibbs fields.
\newblock In {\em Statistical physics and dynamical systems}, pages 371--403.
  Springer, 1985.

\bibitem[DVJ07]{daley2007introduction}
Daryl~J Daley and David Vere-Jones.
\newblock {\em An introduction to the theory of point processes: volume {II}:
  general theory and structure}.
\newblock Springer Science \& Business Media, 2007.

\bibitem[FGK{\etalchar{+}}22]{friedrich2022algorithms}
Tobias Friedrich, Andreas G{\"o}bel, Maximilian Katzmann, Martin~S Krejca, and
  Marcus Pappik.
\newblock Algorithms for hard-constraint point processes via discretization.
\newblock In {\em International Computing and Combinatorics Conference
  (COCOON)}, 2022.

\bibitem[FGKP21]{friedrich2021spectral}
Tobias Friedrich, Andreas G{\" o}bel, Martin Krejca, and Marcus Pappik.
\newblock A spectral independence view on hard spheres via block dynamics.
\newblock {\em arXiv preprint arXiv:2102.07443}, 2021.

\bibitem[Gar00]{garcia2000perfect}
Nancy~L Garcia.
\newblock Perfect simulation of spatial processes.
\newblock {\em Resenhas do Instituto de Matem{\'a}tica e Estat{\'\i}stica da
  Universidade de S{\~a}o Paulo}, 4(3):283--325, 2000.

\bibitem[Gro62]{groeneveld1962two}
J~Groeneveld.
\newblock Two theorems on classical many-particle systems.
\newblock {\em Phys. Letters}, 3, 1962.

\bibitem[HPR20]{helmuth2020algorithmic}
Tyler Helmuth, Will Perkins, and Guus Regts.
\newblock Algorithmic pirogov--sinai theory.
\newblock {\em Probability Theory and Related Fields}, 176(3-4), 2020.

\bibitem[Hub16]{huber2016perfect}
Mark~L Huber.
\newblock {\em Perfect simulation}, volume 148.
\newblock CRC Press, 2016.

\bibitem[HVLM99]{haggstrom1999characterization}
Olle H{\"a}ggstr{\"o}m, Marie-Colette~NM Van~Lieshout, and Jesper M{\o}ller.
\newblock Characterization results and {M}arkov chain {M}onte {C}arlo
  algorithms including exact simulation for some spatial point processes.
\newblock {\em Bernoulli}, 5(4):641--658, 1999.

\bibitem[LY52]{lee1952statistical}
Tsung-Dao Lee and Chen-Ning Yang.
\newblock Statistical theory of equations of state and phase transitions. ii.
  lattice gas and ising model.
\newblock {\em Physical Review}, 87(3):410, 1952.

\bibitem[Mee62]{meeron1962indirect}
Emmanuel Meeron.
\newblock Indirect exponential coupling in the classical many-body problems.
\newblock {\em Physical Review}, 126(3):883, 1962.

\bibitem[Mee70]{meeron1970bounds}
Emmanuel Meeron.
\newblock Bounds, successive approximations, and thermodynamic limits for
  distribution functions, and the question of phase transitions for classical
  systems with non-negative interactions.
\newblock {\em Physical Review Letters}, 25(3):152, 1970.

\bibitem[M{\o}l89]{moller1989rate}
Jesper M{\o}ller.
\newblock On the rate of convergence of spatial birth-and-death processes.
\newblock {\em Annals of the Institute of Statistical Mathematics},
  41(3):565--581, 1989.

\bibitem[M{\o}l01]{moller2001review}
Jesper M{\o}ller.
\newblock A review of perfect simulation in stochastic geometry.
\newblock {\em Lecture Notes-Monograph Series}, pages 333--355, 2001.

\bibitem[MP20]{michelen2020analyticity}
Marcus Michelen and Will Perkins.
\newblock Analyticity for classical gasses via recursion.
\newblock {\em arXiv preprint arXiv:2008.00972}, 2020.

\bibitem[MP21]{mp-CC}
Marcus Michelen and Will Perkins.
\newblock Potential-weighted connective constants and uniqueness of {G}ibbs
  measures.
\newblock {\em arXiv preprint arXiv:2109.01094}, 2021.

\bibitem[MP22]{michelen2022strong}
Marcus Michelen and Will Perkins.
\newblock Strong spatial mixing for repulsive point processes.
\newblock {\em arXiv preprint arXiv:2202.08753}, 2022.

\bibitem[MRR{\etalchar{+}}53]{metropolis1953equation}
Nicholas Metropolis, Arianna~W Rosenbluth, Marshall~N Rosenbluth, Augusta~H
  Teller, and Edward Teller.
\newblock Equation of state calculations by fast computing machines.
\newblock {\em The Journal of Chemical Physics}, 21(6):1087--1092, 1953.

\bibitem[MW07]{moller2007modern}
Jesper M{\o}ller and Rasmus~P Waagepetersen.
\newblock Modern statistics for spatial point processes.
\newblock {\em Scandinavian Journal of Statistics}, 34(4):643--684, 2007.

\bibitem[NF20]{nguyen2020convergence}
Tong~Xuan Nguyen and Roberto Fern{\'a}ndez.
\newblock Convergence of cluster and virial expansions for repulsive classical
  gases.
\newblock {\em Journal of Statistical Physics}, 179:448--484, 2020.

\bibitem[Pen63]{penrose1963convergence}
Oliver Penrose.
\newblock Convergence of fugacity expansions for fluids and lattice gases.
\newblock {\em Journal of Mathematical Physics}, 4(10):1312--1320, 1963.

\bibitem[PR17]{patel2017deterministic}
Viresh Patel and Guus Regts.
\newblock Deterministic polynomial-time approximation algorithms for partition
  functions and graph polynomials.
\newblock {\em SIAM Journal on Computing}, 46(6):1893--1919, 2017.

\bibitem[Pre75]{preston1975spatial}
Chris Preston.
\newblock Spatial birth and death processes.
\newblock {\em Advances in Applied Probability}, 7(3):465--466, 1975.

\bibitem[PW96]{propp1996exact}
James~Gary Propp and David~Bruce Wilson.
\newblock Exact sampling with coupled {M}arkov chains and applications to
  statistical mechanics.
\newblock {\em Random Structures \& Algorithms}, 9(1-2):223--252, 1996.

\bibitem[PY17]{procacci2017convergence}
Aldo Procacci and Sergio~A Yuhjtman.
\newblock Convergence of {M}ayer and virial expansions and the {P}enrose
  tree-graph identity.
\newblock {\em Letters in Mathematical Physics}, 107(1):31--46, 2017.

\bibitem[Rue63]{ruelle1963correlation}
David Ruelle.
\newblock Correlation functions of classical gases.
\newblock {\em Annals of Physics}, 25:109--120, 1963.

\bibitem[Rue71]{ruelle1971widomrowlinson}
David Ruelle.
\newblock Existence of a phase transition in a continuous classical system.
\newblock {\em Physical Review Letters}, 27(16):1040, 1971.

\bibitem[Rue99]{ruelle1999statistical}
David Ruelle.
\newblock {\em Statistical mechanics: Rigorous results}.
\newblock World Scientific, 1999.

\bibitem[Wei06]{Weitz}
Dror Weitz.
\newblock Counting independent sets up to the tree threshold.
\newblock In {\em Proceedings of the Thirty-Eighth Annual ACM {S}ymposium on
  Theory of {C}omputing}, STOC 2006, pages 140--149. ACM, 2006.

\bibitem[YL52]{yang1952statistical}
Chen-Ning Yang and Tsung-Dao Lee.
\newblock Statistical theory of equations of state and phase transitions. {I}.
  {T}heory of condensation.
\newblock {\em Physical Review}, 87(3):404, 1952.

\end{thebibliography}
	
\end{document}